\documentclass[lettersize,journal]{IEEEtran}
\usepackage{amsmath,amsfonts}
\usepackage{algorithm}
\usepackage{algpseudocode} 
\usepackage{array}
\usepackage[caption=false,font=normalsize,labelfont=sf,textfont=sf]{subfig}
\usepackage{textcomp}
\usepackage{stfloats}
\usepackage{color}
\usepackage{url}
\usepackage{verbatim}
\usepackage{graphicx}
\usepackage{subcaption}  

\usepackage{amsthm,amsmath,amsfonts,amssymb,breqn}
\usepackage{natbib, hyperref}
\usepackage{comment}
\usepackage{calc}  

\theoremstyle{plain}
\newtheorem{theorem}{Theorem}[section]
\newtheorem{lemma}{Lemma}[section]

\newtheorem{corollary}{Corollary}[section]

\theoremstyle{remark}

\newtheorem{remark}{Remark}

\newcommand{\Pro}{\mathrm{P}}

\newcommand{\Exp}{\mathrm{E}}

\newcommand{\cS}{S}

\newcommand{\cF}{\mathcal{F}}

\newcommand{\cJ}{\mathcal{J}}

\newcommand{\cV}{\mathcal{V}}

\newcommand{\ba}{\boldsymbol{a}}

\DeclareMathOperator*{\argmax}{arg\,max\;}
\DeclareMathOperator*{\argmin}{arg\,min\;}

\begin{document}

\title{Active Sequential Signal Detection with  Asynchronous Decisions}

\author{Yiming~Xing,~\IEEEmembership{Member,~IEEE,} and~Georgios~Fellouris,~\IEEEmembership{Member,~IEEE}
\thanks{Yiming Xing (email: yimingx4@tongji.edu.cn) is with the School of Mathematical Sciences, Tongji University, Shanghai, China and Georgios Fellouris (email: fellouri@illinois.edu) is with the Department of Statistics, University of Illinois at Urbana-Champaign, Urbana-Champaign, IL, USA.}
\thanks{Part of this work is under review in 2026 IEEE International Symposium on Information Theory.}
\thanks{This research was supported by ...}
\thanks{Manuscript received ...; revised ...}}



\maketitle

\begin{abstract}
This work considers the problem of detecting signals from multiple sequentially observed data streams, where only one stream can be observed at every time instant.
The goal is to detect signals as quickly as possible while controlling the global probabilities of false alarm and missed detection. 
In this active sampling setup, it is impossible to minimize the expected detection time simultaneously for every signal, so we formulate a novel set of performance criteria that aim to minimize the expectations of the order statistics of the detection times.
A novel procedure is proposed, which incorporates an exploration mechanism to a ``follow-the-leader" procedure, and is shown to optimize all the criteria asymptotically as the global error probabilities go to zero. 
Its finite-sample performance is compared with existing and oracle procedures in simulation studies. 

\end{abstract}

\begin{IEEEkeywords}
Active sampling, asymptotic optimality, asynchronous decisions, sequential multiple testing, signal detection
\end{IEEEkeywords}



\section{Introduction}
Consider a detection system comprising of multiple detectors, each monitoring  an environment. It is of interest to identify, based on observations collected in real-time, whether there is signal in each of these environments as quickly as possible while guaranteeing certain global reliability constraints.
Such a problem arises in many scientific and engineering fields, where the ``environments" may be any scenarios or media, and the ``signals" may be any phenomena or segments with certain characteristics of interest, e.g., detecting intrusions in radar arrays \citep{Ref4MissleDet_2004}, anomalies or outliers in surveillance systems \citep{Ref4AnomalyDet_2009}, influenced endpoints in clinical trials \citep{Bartroff_Book_Clinicaltrials}, unoccupied channels in communication networks \citep{Ref4SpectrumSensing_2016}, unexpected or useful patterns in databases \citep{Ref4PatternMining_2017}, matching pairs in gene-association studies \citep{Ref4GeneAsso_2021}, frauds in financial markets or other platforms \citep{Ref4FraudDet_2022}, etc.
If the distinguishing characteristics of signals are specified as the alternative hypotheses and those of non-signals, i.e., noises, are specified as the null hypotheses, such a problem can be naturally formulated as a sequential multiple testing problem.

Such a problem was first studied in \cite{Bartroff_2010} \cite{De_Baron_Seq_Bonf, Step_up_down} and \cite{Bartroff_2014_FWER}, where sequential multiple testing procedures, i.e., procedures where the number of observations collected in each stream is not predetermined but adaptively determined based on the collected observations, were proposed and were shown to control two types of error metrics below arbitrary, user-specified levels.
Besides, these sequential procedures were shown, in numerical studies, to outperform their fixed-sample-size counterparts.

Later works consider, beyond controlling certain error metrics, theoretically minimizing the expectation of an objective function about the number of observations.
Objective functions that have been studied in the literature include:
(1) \textit{the first time at which a signal is detected, if any}, e.g., \cite{Lai2011quickest, Malloy_Nowak_2013, Fellouris2013unstructured, Ali2016quickest, Fellouris_noniid}, 
(2) \textit{a common time at which decisions are made for all streams}, e.g., \cite{Kobi_2015_active, Kobi_2018_heterogeneous, Kobi_2020_composite, Kobi_2022_switching, Kobi_2023_hiera} and \cite{Song_prior, Song_AoS, Aris_IEEE, Aris_TIT2025, chaudhuri2024joint, SPL2025},
and (3) \textit{a function of the times at which decisions are made}, e.g., \cite{Malloy_Nowak_2014, Kobi_2014_index, Kobi_2015_AO, Kobi_2019_nonlinear, PaperII, ITW2024}.

The work \citep{PaperIII} is the first and, as far as the authors know, currently the only one that simultaneously minimizes more than one objective function.
Specifically, \cite{PaperIII} minimizes the expected time of decision simultaneously for every stream.
However, this is achieved in a \emph{full sampling} setup, where all streams can be observed at every time instant, even after some of them have reached a decision.
Our focus of this work is the \emph{active sampling} setup,  where only a single  stream can be observed at every time instant. 
This constraint may arise from practical limitations such as sampling budget, communication bandwidth and computing capacity, and has been considered, e.g., in 
\cite{ControlledSensing, ControlledSensing_NonUniformCost, Kobi_2015_active, Kobi_2020_composite, ControlledSensing_Composite, Aris_IEEE, Aris_TIT2025}
as well as \cite{Qunzhi_2021TIT, Qunzhi2023, Fellouris_2024_CDwithCS, Ana_2024TIT} that study a closely-related problem of sequential detection of changepoints.

In contrast to the full sampling setup in \cite{PaperIII}, in the active sampling setup it is impossible to minimize, even in an asymptotic sense, the 
expected decision time simultaneously for every stream.
Indeed, if the goal is to identify a specific stream as quickly as possible, then clearly we should prioritize collecting observations from that stream, thereby delaying the identification of other streams. 
Therefore, we propose to minimize the expectations of the \textit{order statistics} of the decision times.
In particular, since it is often more critical to quickly identify signals rather than  noises, we focus on the order statistics of the \textit{detection times}. 





Specifically,  we consider $K> 1$ independent data streams, postulate two simple hypotheses for each of them, and assume that only one stream can be observed at each time instant. In addition to the expected total sample size, we are interested in minimizing,
simultaneously for every $1\leq k\leq K$, the expected time until \textit{either detecting the $k_{th}$ signal or declaring that there are fewer than $k$ signals and terminating the procedure}, 
while controlling  the probabilities of falsely detecting any noise and missing any signal below  given levels $\alpha$ and $\beta$, respectively. The solution to this active sequential multiple testing problem requires the specification of a  sampling rule, a detection time for each stream, as well as a global termination time. The sampling rule determines which stream to observe at each time instant, the detection times when to claim that a stream is a signal, and the termination time  when to stop the procedure claiming that all signals have already been detected. 

Our first result in this work is that the expected total sample size until termination can be minimized to a first-order asymptotic approximation \textit{with any sampling rule}, as long as  the detection/termination times induce a Sequential Probability Ratio Test (SPRT) \citep{Wald_Book} in each stream. By this we mean that  a stream is identified as a signal (resp. noise) as soon as its local log-likelihood ratio (LLR) 
statistic, which   quantifies evidence in favor of the stream being a signal, becomes larger (resp. smaller) than a positive (resp. negative) threshold $a$, (resp. $-b$). These thresholds are completely specified by the desired error rates, $\alpha$ and $\beta$.

The choice of the  sampling rule, however, turns out to be critical for the quick detection of signals, which is our main goal of this work. A sampling rule that has been proposed in the literature \citep[Section IV.B]{Kobi_2015_active} is to always  ``follow-the-leader", that is, to always sample the stream with the largest LLR. This sampling rule is  efficient for 
detecting \emph{all signals} \citep[Theorem 4]{Kobi_2015_active}, but not efficient for detecting easier signals earlier, where  by ``easier signals" we mean those that are quicker to be detected on average if observations are continuously collected from them.
Indeed, if the LLR of the easiest signal
happens to go downward based on the first a few observations, which is an event of positive probability,
the ``follow-the-leader" procedure will abandon it and switch to other streams that will on average take longer to be detected even if they are signals.


Our  sampling rule in this work is inspired by the  universal lower bounds  that we establish for the proposed criteria (Section \ref{section: universal lower bound}).
These lower bounds point to an oracle sampling rule, according to which the streams are ordered as signals first, from the easiest to the most difficult, and noises next, in an arbitrary order, and then an SPRT is applied to each of them in this order.
The difficulty level of a signal is quantified by the Kullback-Leibler (KL) divergence from its signal distribution to its noise distribution. 

Since the true subset of signals is unknown, the oracle sampling rule is not directly applicable. However, we may preserve its logic and incorporate a simple exploration mechanism that helps us to focus on signals with precision.  
Specifically, we order all streams according to their difficulty levels as if they are indeed signals, 
and we start by sampling the first stream until its LLR is either above the detection threshold $a>0$, 
or below a negative value $-b'$, which is greater than the futility threshold $-b$.
We repeat this process with the second stream, etc., until all streams have been traversed.  
We refer to this as Phase I of our proposed sampling rule.
Our main result of this work is that, irrespective of how we sample after Phase I, the proposed sampling rule asymptotically minimizes the expected time of the $k_{th}$ detection for every $1\leq k\leq K$, as long as the exploration threshold $b'$ is sufficiently large but much smaller than $b$.

At the end of Phase I we have not yet decided for the streams whose LLRs are between $-b'$ and $-b$. 
Thus, we need to continue sampling them. We refer to this as  Phase II. 
While how we sample in Phase II does not affect  our asymptotic optimality theory, it can have an impact in practice, especially in the non-asymptotic regime. Thus, 
for Phase II  sampling we do  propose to``follow-the-leader". With this choice, the pure ``follow-the-leader" procedure is recovered as a limiting case of our proposal when there is no exploration, i.e., when $b'=0$.  
We study the performance of the resulting procedure in  numerical studies, where we observe it to be much more efficient in detecting easy signals early than the full-time ``follow-the-leader" procedure, while having similar expected total sample size.

The rest of the paper is organized as follows:
In Section \ref{section: problem formulation} we formulate the problem.
In Section \ref{section: universal lower bound} we establish the universal lower bounds.
In Section \ref{section: the proposed procedure}, we present and analyze the proposed procedure.
In Section \ref{section: numerical studies} we conduct numerical studies.
In Section \ref{section: conclusion} we conclude and raise future research directions.
Long proofs and supporting lemmas are presented in the Appendix.

\section{Problem formulation} \label{section: problem formulation}
Let $\{X_i(n),\,n\geq 1\}$, $i\in[K]:=\{1,\ldots,K\}$ be $K\geq 1$ independent streams of i.i.d. random elements. 
For each $i\in[K]$, assume that the density of $X_i(1)$ with respect to some $\sigma$-finite measure $\nu_i$ is either $f_i$ or $g_i$.
We refer to stream $i$ as a \emph{signal} if the density of $X_i(1)$ is $g_i$, and as a \emph{noise} otherwise. For any $B\subseteq[K]$, we denote by $\Pro_B$ the joint distribution of all streams if  the subset of signals is $B$, i.e., if $X_i(1)\sim g_i$ for $i\in B$ and $X_i(1)\sim f_i$ for $i\notin B$, and by $\Exp_B$ the corresponding expectation. 

We focus on the active sampling setup, where at every time instant 
it is possible to observe  only one stream.  
This stream can be selected based on the observations collected up to the previous time instant. That is, at each time $n$ we only  observe  the value  $X_{\cS(n)}(n)$ from stream $\cS(n)$, which we determine based on $X_{S(n-1)}(n-1), \ldots, X_{S(1)}(1)$ and possibly some randomization. 
Thus,  
we refer to the sequence $\cS:=\{\cS(n),\,n\geq 1\}$ as a \textit{sampling rule} if for each  time $n\geq 1$ the $[K]$-valued random element $\cS(n)$ is $\cF^\cS(n-1)$-measurable,  where $\cF^\cS(0)$ is a $\sigma$-algebra independent with the observations and $\cF^\cS(n) := \sigma(\cF^\cS(n-1), X_{\cS(n)}(n))$.



Our aim in this work is not only to minimize the expected total  sample size until all decisions are made, but also to detect signals as quickly as possible. 
Specifically, for every $k\in[K]$, we denote by $T_k$ the time instant at which the $k_{th}$ detection of signal occurs. We denote by $T_{\text{stop}}$ the time at which we claim that all signals have been detected. 
 As only one stream is observed at each time instant,  $T_{\text{stop}}$ is also the  total sample size of the procedure.
 
The random times $T_k, k\in[K]$ and $T_{\text{stop}}$ cannot utilize future observations, so they must be $\cF^\cS$-stopping times.
Without loss of generality, we assume that the stream detected as a signal at time $T_k$ is the one sampled at this time, i.e., $S(T_k)$.
Thus, 
the subset of detected signals upon termination is 
\begin{equation}\label{decision D}
     D := \{\cS(T_k): k\in[K],\; T_k\leq T_{\text{stop}}  \}.
\end{equation}
Therefore, the sampling rule and the stopping times completely determine the decision rule. Of course, the  value of $T_k$ is irrelevant when $T_k>T_{\text{stop}}$, so we  can simply set it as $+\infty$.

To sum up, we consider a sequential, active,  asynchronous signal detection problem, for which we need to specify a sampling rule $\cS$, which induces a filtration $\cF^\cS$,  
and $K+1$ $\cF^\cS$-stopping times 
$\{T_k: k\in[K]\},\, T_{\text{stop}}$, which induce the estimated subset of signals, defined by \eqref{decision D}. We refer to  $\delta=(\cS,\,\{T_k: k\in[K]\},\,T_{\text{stop}})$ as a \textit{signal detection procedure}, and denote by $\Delta$ the family of all such procedures.



When the  true subset of signals  is $B\subseteq[K]$, we say that there is a type-I (or false positive, or false alarm) error   if $ D\backslash B\neq\emptyset$, and that there is a type-II (or false negative, or missed detection)  error if $B\backslash D\neq\emptyset$.
We denote by $\Delta(\alpha,\beta)$ the 
subfamily of procedures that terminate almost surely and control the probabilities of at least one type-I error and 
at least one type-II error 
below $\alpha$ and $\beta$ respectively, i.e.,
\begin{align} \label{Delta(alpha,beta)}
\begin{split}
    \Delta(\alpha,\beta) := \Big\{ \delta\in\Delta :\;  &
    \Pro_B( T_{\text{stop}}<\infty) = 1, \\
   & \Pro_B( D\backslash B\neq\emptyset)\leq\alpha, \; \\
    &\Pro_B(B\backslash  D\neq\emptyset)\leq \beta, 
    \, \forall \,  B\subseteq[K] \Big\},
    \end{split}
\end{align}
where $\alpha,\beta\in(0,1)$. For any possible true subset of signals 
$B \subseteq[K]$, we are interested in achieving not only  the smallest possible expected time until termination, 
\begin{equation} \label{objective funcs, Tstop} \cJ_B(\alpha, \beta):= \inf_{\delta\in\Delta(\alpha,\beta)}\Exp_B\left[ T_{\text{stop}}\right],
\end{equation}
but also the smallest possible expected time until \textit{either detecting the 
$k_{th}$ signal or terminating the procedure}, 
\begin{equation} \label{objective funcs, Tk}   \cJ_{B,k}(\alpha, \beta):= \inf_{\delta\in\Delta(\alpha,\beta)}\Exp_B\left[ T_k\wedge  T_{\text{stop}}\right], 
 \quad \forall\; k\in[K].
\end{equation}
We will design a procedure that achieves all these infima simultaneously for every $B \subseteq[K]$ and every $k \in [K]$, in an asymptotic sense as $\alpha, \beta \to 0$.


\begin{remark}
The proposed problem generalizes various formulations in the literature. For example, when the goal is to  make all decisions synchronously, then one needs to specify only $T_{\text{stop}}$, and each  $T_k$ can be set equal to either $T_{\text{stop}}$ or $+\infty$ depending on whether the corresponding stream is identified as a signal or noise. In this context,  the only relevant optimization problem is 
\eqref{objective funcs, Tstop}  (see, e.g., \cite{Kobi_2015_active, Kobi_2018_heterogeneous, Kobi_2020_composite, Kobi_2022_switching, Kobi_2023_hiera} and \cite{Song_prior, Song_AoS, Aris_IEEE, Aris_TIT2025, chaudhuri2024joint, SPL2025}).

On the other hand, if the goal is to detect the existence of signals,
then we need to specify only $T_1=T_{\text{stop}}$, and each $T_k,\,k\geq 2$ can be set equal to $+\infty$. 
In this context, the only relevant optimization problem is \eqref{objective funcs, Tk} with $k=1$
(see, e.g., \cite{Lai2011quickest, Malloy_Nowak_2013, Fellouris2013unstructured, Ali2016quickest, Fellouris_noniid}).
    
\end{remark}

\subsection{Assumptions and notations}
Our only distributional assumption throughout the paper is that, for every $i\in[K]$, the Kullback-Leibler (KL) divergences between $f_i$ and $g_i$ are positive and finite, i.e., 
\begin{equation} \label{only assumption}
\begin{aligned}
    I_i & := \int g_i \log \left( \frac{g_i}{f_i}\right) d\nu_i\in(0,\infty), \\
    J_i & := \int f_i \log \left(\frac{f_i}{g_i} \right) d\nu_i\in(0,\infty).
\end{aligned}
\end{equation}

For any sampling rule $\cS$, stream $i\in[K]$, and time $n\geq 1$ we set
\begin{equation} \label{LLR_k}
\lambda_i^\cS(n) := \sum_{m=1}^n \log \left( \frac{g_i(X_i(m))}{f_i(X_i(m))} \right) \,  1\{\cS(m)=i\},
\end{equation}
and we refer to $\lambda_i^\cS(n)$ as the local log-likelihood ratio statistic (LLR) in stream $i$ up to time $n$ under the sampling rule $\cS$.
We provide a formal justification for this terminology in Appendix \ref{appendix, llr}. We simply write $\lambda_i(n)$ when stream $i$ is continuously sampled up to time $n$, i.e., 
\begin{equation} \label{LLR_k_full}
    \lambda_i(n) := \sum_{m=1}^n \log \left( \frac{g_i(X_i(m))}{f_i(X_i(m))} \right).
 \end{equation}
For each $i \in [K]$ we denote by $T_i^{\text{SPRT}}$ the number of observations until the LLR in stream $i$ 
becomes either larger than $a$ or smaller than $-b$ \textit{when stream $i$ is sampled continuously}, and we set
$D_i^{\text{SPRT}}$ equal to $1$ in the former case and $0$ in the latter:
\begin{equation} \label{local SPRT}
\begin{aligned}
    T_i^{\text{SPRT}} &:= \inf\{ n\geq 1: \lambda_i(n) \notin (-b,a) \}, \\
    D_i^{\text{SPRT}} &:= 1\{  \lambda_i(T_i^{\text{SPRT}}) \geq a \},
\end{aligned}
\end{equation}
This is 
    is  Wald's Sequential Probability Ratio Test (SPRT) for the testing problem in stream $i$.


For any $B\subseteq[K]$ and $\epsilon\in(0,1)$ we set
\begin{equation} \label{cV's}
    \cV_B(\epsilon) := \sum_{i\in B}\cV_i^+(\epsilon) + \sum_{i\notin B}\cV_i^-(\epsilon),
\end{equation}
where,  for any $i\in[K]$,
\begin{equation} \label{cVi's}
\begin{aligned}
    \cV_i^+(\epsilon) & := \sup\left\{n\geq 1: \lambda_i(n)/n\leq I_i(1-\epsilon) \right\} + 1, \\
    \cV_i^-(\epsilon) & := \sup\left\{n\geq 1: -\lambda_i(n)/n\leq J_i(1-\epsilon) \right\} + 1.
\end{aligned}
\end{equation}
Note that these random times are not stopping times. 
Also note that, for any $i\in B$,
\begin{equation*}
\begin{aligned}
    & \; \{ \lim_{n\to\infty} \lambda_i(n)/n = I_i \} \\
    = & \; \{ \sup\left\{n\geq 1: |\lambda_i(n)/n-I_i|>\epsilon \right\}<\infty \text{ for all } \epsilon>0 \} \\
    \subseteq & \; \{ \cV_i^+(\epsilon)<\infty \text{ for all } \epsilon\in(0,1) \},
\end{aligned}
\end{equation*}
and, similarly, for any $i\notin B$,
$$\{ \lim_{n\to\infty} -\lambda_i(n)/n = J_i \} \subseteq \{ \cV_i^-(\epsilon)<\infty \text{ for all } \epsilon\in(0,1) \}.$$
Therefore, by the Strong Law of Large Numbers, condition \eqref{only assumption} implies $\Pro_B(\cV_B(\epsilon)<\infty)=1$ for all $\epsilon\in(0,1)$.
In Lemma \ref{Lemma: properties of LLR} we show that condition \eqref{only assumption} further implies 
$$V_B(r,\epsilon):=\Exp_B[(\cV_B(\epsilon))^r]<\infty, \; \forall \; r\geq 1 \text{ and } \epsilon\in(0,1).$$

\section{Universal lower bounds} \label{section: universal lower bound}
In this section we establish a non-asymptotic
lower bound for  the infima in  
\eqref{objective funcs, Tstop} and  \eqref{objective funcs, Tk}.  For this, we introduce the function 
\begin{equation} \label{func d}
    d(x,y):= x \log\left( \frac{x}{1-y} \right)+(1-x)\log \left( \frac{1-x}{y}\right)
\end{equation}
for $x,y\in(0,1)$,
which is the KL-divergence of a Bernoulli distribution with parameter $x$ against one with parameter $1-y$.  Moreover, for any non-empty subset $B\subseteq[K]$
we denote by
\begin{equation} \label{I(k)(B)}
    I_{(1)}(B)\geq \cdots \geq I_{(|B|)}(B)
\end{equation}
the non-increasingly ordered 
KL divergences in $\{I_i: i\in B\}$. 




\begin{theorem} \label{theorem: ALB}
    Let $\alpha, \beta \in (0,1)$ such that $\alpha+\beta<1$, and $B\subseteq[K]$.  For any $1\leq k\leq |B|$ we have
\begin{equation} \label{ALB for tauk}
     \cJ_{B,k}(\alpha,\beta)  \geq 
        \sum_{i=1}^k \frac{d(\beta, \alpha)}{I_{(i)}(B)}, 
    \end{equation}
    and for any $|B|< k\leq K$ we have 
    \begin{equation} \label{ALB for taus}
    \begin{aligned}
    \cJ_{B}(\alpha,\beta) &\geq \cJ_{B,k}(\alpha,\beta) \geq  \sum_{i\in B} \frac{d(\beta, \alpha)}{I_i} +  \sum_{i\notin B} \frac{d( \alpha, \beta)}{J_i}.
    \end{aligned}
    \end{equation}
\end{theorem}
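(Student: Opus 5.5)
The plan is a change-of-measure argument carried out stream by stream, combined with the sampling-budget identity. For each stream $i$ let $N_i := \sum_{n=1}^{T_k\wedge T_{\text{stop}}} 1\{\cS(n)=i\}$ be the number of observations taken from stream $i$ up to time $T_k\wedge T_{\text{stop}}$. Since only one stream is observed per time instant, $T_k\wedge T_{\text{stop}} = \sum_{i\in[K]} N_i$. We may assume $\Exp_B[T_k\wedge T_{\text{stop}}]<\infty$, since otherwise there is nothing to prove. The approach is to lower bound $\Exp_B[N_i]$ for suitable streams and sum.

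\textbf{Per-stream bound.} Fix $i\in B$ and compare $\Pro_B$ with $\Pro_{B\setminus\{i\}}$. The likelihood ratio of $\Pro_B$ to $\Pro_{B\setminus\{i\}}$ on $\cF^\cS(\tau)$, with $\tau:=T_k\wedge T_{\text{stop}}$, is $\exp(\lambda_i^\cS(\tau))$; the justification is in Appendix~\ref{appendix, llr}, and the sampling randomization cancels. Wald's identity applied to the adaptively sampled stream gives $\Exp_B[\lambda_i^\cS(\tau)] = I_i\,\Exp_B[N_i]$. The data-processing inequality for KL divergence, applied to any event $A\in\cF^\cS(\tau)$, then yields
\[ I_i\,\Exp_B[N_i] \geq d\big(\Pro_B(A), \ldots\big), \]
i.e. the binary KL divergence between the probabilities of $A$ under the two measures. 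The event should express that stream $i$ has been declared a signal by time $\tau$: take $A_i=\{i\in D,\ \text{detected by } \tau\}$.

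\textbf{Which streams contribute.} For $k>|B|$, every signal detection must occur before $\tau=T_{\text{stop}}\wedge T_k$. This holds because with only $|B|$ true signals, on the no-error event the $k$-th detection never happens, so $T_k\wedge T_{\text{stop}}=T_{\text{stop}}$ there. Consequently $\Pro_B(A_i)\geq 1-\beta$ and $\Pro_{B\setminus\{i\}}(A_i)\leq\alpha$, and monotonicity of the binary KL divergence gives $\Exp_B[N_i]\geq d(\beta,\alpha)/I_i$. Symmetrically, for $i\notin B$, compare with $\Pro_{B\cup\{i\}}$ using the event $\{i\notin D\}$, which is decided at $T_{\text{stop}}$; this gives $\Exp_B[N_i]\geq d(\alpha,\beta)/J_i$. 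Summing over streams yields \eqref{ALB for taus}, and the inequality $\cJ_B\geq\cJ_{B,k}$ is immediate.

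\textbf{Main obstacle: the case $k\leq|B|$.} The $k$-th detection need not involve the easiest streams, and at time $T_k$ only $k$ streams are guaranteed to be detected. I would handle this by conditioning on the set $\mathcal{D}_k$ of the first $k$ detected streams, which on the no-error event is a $k$-subset of $B$. For each $i$, the same change-of-measure argument applied to $\{i\in\mathcal{D}_k\}$, with $p_i:=\Pro_B(i\in\mathcal{D}_k)$, gives a lower bound on $\Exp_B[N_i]$ in terms of $p_i$. The constraints on the $p_i$ are $\sum_{i\in B} p_i\geq k(1-\beta)$ roughly and $p_i\leq 1$. A cleaner route is to write
\[ \Exp_B[\tau]\geq \Exp_B\Big[\sum_{i\in\mathcal{D}_k} N_i\Big] \]
and apply the change of measure on each event $\{\mathcal{D}_k=C\}$, lower bounding $\sum_{i\in C}\Exp_B[N_i]$ for every fixed $C$. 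The resulting bound is at least $\min_{|C|=k,\,C\subseteq B}\sum_{i\in C} d(\beta,\alpha)/I_i$, and this minimum equals $\sum_{i=1}^k d(\beta,\alpha)/I_{(i)}(B)$. Making this step rigorous requires convexity of $d$ (Jensen's inequality across the mixture over $C$). I expect this aggregation to be the delicate step, and I would first try to obtain it via a linear-programming bound: minimize $\sum_i x_i/I_i$ subject to $0\leq x_i\leq 1$ and $\sum_i x_i\geq k$, using the per-stream KL bounds scaled by $d(\beta,\alpha)$.
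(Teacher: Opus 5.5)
\paragraph{The case $|B|<k\le K$.} Your per-stream summation here is equivalent to the paper's argument. The paper compares $\Pro_B$ with $\Pro_C$ for $C\subsetneq B$ and $C\supsetneq B$ and then balances the sampling proportions. Your events, however, need repair.

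\begin{itemize}
\item $\Pro_B(A_i)\ge 1-\beta$ does not follow. On $\{T_k\le T_{\text{stop}}\}$ a false alarm has occurred, but $i$ may be detected only after $\tau$ without any missed detection, so you get only $1-\alpha-\beta$.
\item For $i\notin B$, the event $\{i\notin D\}$ is not $\cF^\cS(\tau)$-measurable.
\end{itemize}

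Use instead $\{i\text{ detected by }\tau\}\cup\{T_k\le T_{\text{stop}}\}$ for $i\in B$, and $\{i\text{ not detected by }\tau\}\cap\{T_k>T_{\text{stop}}\}$ for $i\notin B$. Alternatively, use the paper's count events $\{T_{|B|}\le T_{\text{stop}}\}$ and $\{T_{|B|+1}\le T_{\text{stop}}\}$. In each case, one of the event and its complement forces a false alarm under one compared hypothesis, and the other forces a missed detection under the other hypothesis. This yields exactly $d(\beta,\alpha)/I_i$ and $d(\alpha,\beta)/J_i$.

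\paragraph{The genuine gap: $1\le k\le|B|$.} Comparing one stream at a time only gives $I_i\Exp_B[N_i]\ge d(1-p_i,\alpha)$, where $p_i=\Pro_B(i\in\mathcal{D}_k)$ and $\sum_{i\in B}p_i\ge k(1-\alpha-\beta)$. Because $p\mapsto d(1-p,\alpha)$ is convex, spreading the $p_i$ lowers this bound by entropy terms.

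\begin{itemize}
\item Already for $k=1$, $B=\{1,2\}$, $I_1=I_2=I$, the admissible choice $p_1=p_2=1/2$ gives $2d(1/2,\alpha)/I=\log\frac{1}{4\alpha(1-\alpha)}/I$. This is smaller than $d(\beta,\alpha)/I$ for small $\alpha,\beta$.
\item So the premise of your LP step, $I_i\Exp_B[N_i]\ge x_i\,d(\beta,\alpha)$ with $\sum_i x_i\ge k$, is not available. Convexity of $d$ works against you here.
\item The ``cleaner route'' also breaks. The martingale $\lambda_i^\cS-I_iN_i^\cS$ has mean zero only unconditionally, not on path-dependent events. Hence $\Exp_B[\lambda_i^\cS(\tau);\,\mathcal{D}_k=C]\ne I_i\Exp_B[N_i;\,\mathcal{D}_k=C]$ in general.
\end{itemize}

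The idea you are missing is the paper's joint change of measure. Compare $\Pro_B$ with $\Pro_C$ for $C\subseteq B$, $|C|=k-1$; the log-likelihood ratio is $\sum_{i\in B\setminus C}\lambda_i^\cS$. Use the single event $\{T_k\le T_{\text{stop}}\}$: its failure is a missed detection under $\Pro_B$, and its occurrence is a false alarm under $\Pro_C$. This gives $\sum_{i\in B\setminus C}I_i\Exp_B[N_i]\ge d(\beta,\alpha)$ with no entropy loss. With $C=\emptyset$ it proves the case $k=1$ at once, and for $k=|B|$ it reduces to per-stream bounds.

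\paragraph{Caveat about the paper's proof.} The paper's subsequent max--min step (the lemma in its appendix) is wrong for $1<k<|B|$.

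\begin{itemize}
\item Take $B=\{1,2,3\}$, $I_i\equiv I$, $k=2$. The choice $\Exp_B[N_i]=d(\beta,\alpha)/(2I)$ satisfies every constraint $\sum_{i\in B\setminus C}I\,\Exp_B[N_i]\ge d(\beta,\alpha)$ with $|C|\le 1$.
\item Yet it totals $\tfrac32\, d(\beta,\alpha)/I<2d(\beta,\alpha)/I$. Equivalently, uniform weights give $\inf_{|C|=2}\sum_{i\in C}w_iI_i=2I/3$, not the lemma's $I/2$.
\end{itemize}

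So for intermediate $k$, neither argument establishes the stated non-asymptotic bound.

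\paragraph{What your per-stream bounds do give.} They yield the first-order version needed for the corollary, which the paper's joint constraints alone do not. The key fact is $I_i\Exp_B[N_i]\ge p_i|\log\alpha|-\log 2$ for every $i\in B$, which follows from $d(1-p,\alpha)\ge p|\log\alpha|-\log 2$. Solving the LP over $\{0\le p_i\le 1,\ \sum_{i\in B}p_i\ge k(1-\alpha-\beta)\}$ then gives $\Exp_B[T_k\wedge T_{\text{stop}}]\ge(1-o(1))\sum_{i=1}^k|\log\alpha|/I_{(i)}(B)$ as $\alpha,\beta\to0$.
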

\begin{IEEEproof}
    See Appendix \ref{appendix, proofs}.
\end{IEEEproof}



To interpret these lower bounds, we  first recall that, for each $i\in[K]$,  $d(\beta, \alpha)/ I_i$ (resp.  $d(\alpha, \beta)/ J_i$)) is a lower bound on the expected sample size  required for solving the testing problem in stream $i$ if it is a signal (resp. noise), while controlling the type-I and type-II error rates below $\alpha$ and $\beta$ respectively (see, e.g., \cite{Wald_Book}). This lower bound is attained by  the SPRT in  \eqref{local SPRT} exactly with thresholds $a=\log((1-\beta)/\alpha)$ and $b=\log((1-\alpha)/\beta)$ if there are no overshoots over the boundaries, 
and asymptotically as  $\alpha,\beta\to 0$ with thresholds $a=|\log\alpha|$ and $b=|\log\beta|$  (see, e.g., \cite{Tartakovsky_Book}).
Thus,  for each $i\in[K]$, the KL divergence $I_i$ (resp. $J_i$) characterizes the inherent difficulty of the testing problem in the $i_{th}$ stream when it is a signal (resp. noise). 

In view of this,  \eqref{ALB for taus}  states that the expected time until termination, i.e., the expected total sample size, as well as the expected time until either detecting more signals than actually exist or terminating, 
is lower bounded by the sum of (the lower bounds of) the expected sample sizes required for solving \textit{all} testing problems. 


On the other hand,  \eqref{ALB for tauk}  states that   the expected time until either detecting the $k_{th}$ signal or terminating with fewer than $k$ signals detected is lower bounded by the sum of (the lower bounds of) the expected sample sizes required for solving the testing problems in the \textit{$k$ easiest signal streams}, that is, the $k$ signal streams with the largest KL divergences in $\{I_i:i\in B\}$. 


We end this section by formulating  asymptotic approximations to the lower bounds of the previous theorem as $\alpha, \beta \to 0$. 

\begin{corollary} 
Fix $B\subseteq[K]$. As $\alpha,\beta\to 0$,  for   $1\leq k\leq |B|$ we have
    \begin{equation} \label{coro: asym_lower_bound_k}
    \begin{aligned}
\cJ_{B,k}(\alpha,\beta) \gtrsim 
        \sum_{i=1}^k \frac{ |\log\alpha|}{I_{(i)}(B)},
    \end{aligned}
    \end{equation}
and  for  $|B|< k\leq K$ we have 
    \begin{align}
    \label{coro: asym_lower_bound}
 \cJ_{B}(\alpha,\beta) \geq \cJ_{B,k}(\alpha,\beta)  
&        \gtrsim   \sum_{i\in B} \frac{|\log\alpha|}{I_i} +  \sum_{i\notin B} \frac{ |\log\beta|}{J_i}. 
    \end{align}
\end{corollary}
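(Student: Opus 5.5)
This corollary follows from Theorem \ref{theorem: ALB} once we know how $d(\beta,\alpha)$ and $d(\alpha,\beta)$ behave as $\alpha,\beta\to 0$. Here $\gtrsim$ means that the ratio of the left-hand side to the right-hand side has $\liminf$ at least $1$. Since the bounds in \eqref{ALB for tauk} and \eqref{ALB for taus} are finite sums with fixed positive coefficients $1/I_{(i)}(B)$ and $1/J_i$, it suffices to prove
\begin{equation*}
d(\beta,\alpha) = |\log\alpha|\,(1+o(1)), \qquad d(\alpha,\beta) = |\log\beta|\,(1+o(1))
\end{equation*}
as $\alpha,\beta\to 0$ in any manner. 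Note that $\alpha+\beta<1$ holds eventually, so the theorem applies. The inequality $\cJ_B\geq\cJ_{B,k}$ is already part of \eqref{ALB for taus}; it also follows directly from $T_k\wedge T_{\text{stop}}\leq T_{\text{stop}}$.

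To obtain the expansion, expand \eqref{func d} with $x=\beta$, $y=\alpha$:
\begin{equation*}
d(\beta,\alpha)=\beta\log\beta-\beta\log(1-\alpha)+(1-\beta)\log(1-\beta)+(1-\beta)|\log\alpha|.
\end{equation*}
As $\alpha,\beta\to 0$, the terms $\beta\log\beta$, $\beta\log(1-\alpha)$ and $(1-\beta)\log(1-\beta)$ all tend to $0$, so together they are $o(1)$. The remaining term is $(1-\beta)|\log\alpha| = |\log\alpha|(1+o(1))$. Because $|\log\alpha|\to\infty$, an additive $o(1)$ term is also $o(|\log\alpha|)$. Hence $d(\beta,\alpha)/|\log\alpha|\to 1$. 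The same argument with the roles of $\alpha$ and $\beta$ exchanged gives $d(\alpha,\beta)/|\log\beta|\to 1$.

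Substituting these into the two bounds of the theorem gives \eqref{coro: asym_lower_bound_k} and \eqref{coro: asym_lower_bound}. For the mixed sum in \eqref{coro: asym_lower_bound}, write each term as its leading part times $(1+o(1))$. Since all summands are nonnegative, the ratio of the sum to $\sum_{i\in B}|\log\alpha|/I_i+\sum_{i\notin B}|\log\beta|/J_i$ is bounded below by the minimum of the individual $(1+o(1))$ factors, and that minimum tends to $1$. This argument does not require $\alpha$ and $\beta$ to vanish at comparable rates.

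There is no real obstacle. The one point needing care is the mixed sum with unrelated rates, and it is handled by the termwise minimum argument just given.
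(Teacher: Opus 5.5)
Your proposal is correct and takes the same route as the paper: the paper's proof is the one-line remark that the corollary follows from Theorem \ref{theorem: ALB} together with $d(x,y)\sim|\log y|$ as $x,y\to 0$. Your explicit expansion of $d(\beta,\alpha)$ verifies that fact, and your termwise-minimum argument for the mixed sum fills in a step the paper leaves implicit.
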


    \begin{IEEEproof}
     Follows by  Theorem \ref{theorem: ALB} and the fact that $d(x,y)\sim |\log y|$ as $x,y\to 0$.
\end{IEEEproof}

\section{The proposed procedure } \label{section: the proposed procedure}

\begin{algorithm}[h] 
    \caption{The proposed procedure} \label{the proposed algorithm}
    \begin{algorithmic}[1] 
        \State Input $a,b>0$ and $b'\geq 0$
        
        \State Initialize $k=0$, $n=0$, $D=\emptyset$, $N=\emptyset$
        
        \quad\;\; $\lambda_i=0$ for $1\leq i\leq K$  
        \For{$1\leq i\leq K$} \hfill \textit{// Phase 1}
            
            \While{$\lambda_i\notin(-b',a)$}
                \State $n=n+1$
                
                \State generate new $X_i$
                
                \State compute $\lambda_i=\lambda_i+\log\left(\frac{g_i(X_i)}{f_i(X_i)}\right)$ 
            \EndWhile
            \If{$\lambda_i\geq a$}
                \State $k=k+1$, $T_k=n$, $D=D\cup\{i\}$
            \EndIf
            \If{$\lambda_i\leq-b$}
                \State $N=N\cup\{i\}$
            \EndIf
        \EndFor
        \While{$D\cup N\neq[K]$} \hfill \textit{// Phase II}
            \State $n=n+1$

            \State let $i$ be one of 
            
            $\begin{cases}
            \begin{alignedat}{2}
                & \argmax_{i'\in[K]\backslash(D\cup N)} \lambda_{i'} \quad & \textit{// Follow-the-leader} \\
                & \argmin_{i'\in[K]\backslash(D\cup N)} |\lambda_{i'}| \quad & \textit{// Follow-the-absolute-leader} \\
                & \min \{[K]\backslash(D\cup N)\} \quad & \textit{// In-order} \\
            \end{alignedat}
            \end{cases}$

            \State generate new $X_i$
            
            \State compute $\lambda_i=\lambda_i+\log\left(\frac{g_i(X_i)}{f_i(X_i)}\right)$ 
            \If{$\lambda_i\geq a$}
                \State $k=k+1$, $T_k=n$, $D=D\cup\{i\}$
            \EndIf
            \If{$\lambda_i\leq -b$}
                \State $N=N\cup\{i\}$
            \EndIf
        \EndWhile
        \State $T_{\text{stop}} = n$
        \State Output $T_1,\ldots,T_k$, $T_{\text{stop}}$, $D$
    \end{algorithmic}
\end{algorithm}

In this section we introduce the proposed procedure, whose components are denoted with a hat symbol $\hat\cdot$, and establish the main theoretical results of this work. First, in Subsection \ref{subsec: a general decision rule}, we introduce the proposed detection and termination times.  We show that these suffice for achieving  asymptotically the optimal  expected \textit{total sample size}, irrespective of the choice of the sampling rule.  In Subsection  \ref{subsec: the proposed sampling rule} we proceed with the specification of a sampling rule that leads to asymptotically optimal expected signal detection times.

\subsection{The detection and termination times} \label{subsec: a general decision rule}
Let $S$ be an arbitrary sampling rule. We start with the specification of the detection times. For this, we fix a threshold $a>0$, and we identify  a stream as signal as soon as its LLR exceeds $a$.
Thus, the proposed detection times are naturally defined as 
\begin{align} \label{general Tk}
    \hat T_k := \inf & \{n>  \hat T_{k-1}:\hat D(n) \neq \hat D(n-1)\}, \quad k \in [K],
\end{align}
 where
  $\hat T_0:= 0$,  $\hat D(0)=\emptyset$, and $   \hat D(n)$ is the subset of streams that have already been identified  as signals at  time $n$, i.e., 
\begin{align} \label{D(n)}
    \hat D(n) &:= \{i\in[K]: \lambda_i^{\cS}(n)\geq a\}.
\end{align}

We continue with the determination of the termination time. For this, we need a criterion for identifying a stream as noise. To this end, we fix another threshold, $b>0$, and we identify  a stream as noise as soon as its LLR becomes smaller than $-b$. 
That is, the subset of streams that have been identified as noises
at  time  $n$ is
\begin{align} \label{N(n)}    \hat N(n) &:= \{i\in[K]: \lambda_i^{\cS}(n) \leq -b \}.
\end{align}
The procedure terminates when all streams have been identified as either signals or noises, i.e., at
\begin{align} \label{general Tstop}
\begin{split}
     \hat T_{\text{stop}} 
     &:= \inf\{n\geq 1: \lambda_i^\cS(n)\notin(-b,a) \text{ for all } i\in[K]\} \\
     &= \inf\{n\geq 1: \hat D(n) \cup \hat N(n) =[K] \}.     
     \end{split}
\end{align}
 
We stress that the detection/termination times have been defined with respect to any sampling rule, $S$. Our only assumption  regarding the sampling rule, for now, which can be made without any loss of generality, is that a stream is no longer sampled  once it has been identified as either noise or  signal.  That is, the set of
\textit{active} data streams
at time  $n$, that is, the data streams that can still be sampled at time $n+1$, is 
\begin{align} \label{A(n)} 
\begin{split}
\hat{A}(n) &:= \{i\in[K]: \lambda_i^{\cS}(n) \in (-b,a) \} \\
&= [K] \setminus (\hat D(n) \cup \hat{N}(n)).
\end{split}
\end{align}
Given this and  the assumption of independence over time and across streams, it follows that   the number of observations from stream $i$ until termination,
$\sum_{m=1}^  {T_{\text{stop}}} 
  1\{S(m)=i\}$,
has the same distribution as $T_i^{\text{SPRT}}$,
defined in \eqref{local SPRT}, and  the  probability of identifying stream $i$ as signal or noise is the same as that of the SPRT  in \eqref{local SPRT}, i.e., 
\begin{equation*}
\begin{aligned}
    \Pro_B(i\in \hat D) & = \Pro_B(D_i^{\text{SPRT}}=1), \\
    \Pro_B(i\notin \hat D) & = \Pro_B(D_i^{\text{SPRT}}=0),
\end{aligned}
\end{equation*}
where  $\hat D:=\hat D(\hat T_{\text{stop}})$. 
These observations provide the basis for the following result.

\begin{theorem} \label{theorem, error control of the general decision rule}
For any $a,b>0$ and $B\subseteq[K]$, we have
    \begin{align} \label{a.s. finiteness and error control}
    \begin{split}
        \Pro_B(\hat T_{\text{stop}}<\infty) &= 1, \\
        \Pro_B(\hat D\backslash B\neq\emptyset) &\leq (K-|B|) \ e^{-a}, \; \\
        \Pro_B(B\backslash \hat D\neq\emptyset) &\leq |B|\, e^{-b}.
    \end{split}
    \end{align}
Therefore, for any $\alpha,\beta\in(0,1)$ we have  $\hat\delta\in\Delta(\alpha,\beta)$ if we set
\begin{equation} \label{selection of thresholds}
    a=|\log\alpha|+\log K, \quad b=|\log\beta|+\log K.
\end{equation}
\end{theorem}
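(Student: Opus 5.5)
The plan is to reduce everything to the single-stream SPRTs in \eqref{local SPRT}, using the coupling already noted before the statement. Because a stream is sampled only while it is active, and the observations are independent over time and across streams, the $j$-th observation taken from stream $i$ under $\cS$ can be identified with $X_i(j)$. Hence the LLR path of stream $i$ under $\cS$ is the path $\lambda_i(\cdot)$, stopped when stream $i$ leaves $(-b,a)$. I would make this precise by constructing the procedure on the product space of the $K$ streams. On that space the number of samples drawn from stream $i$ before termination equals $T_i^{\text{SPRT}}$, and $i\in\hat D$ if and only if $D_i^{\text{SPRT}}=1$, pathwise. The only care needed is that $\cS$ may randomize through $\cF^\cS(0)$. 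Since that $\sigma$-algebra is independent of the observations, this does not affect the argument.

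\textbf{Finiteness.} Under $\Pro_B$ we have $\lambda_i(n)/n\to I_i>0$ or $\to -J_i<0$ almost surely by the SLLN, using \eqref{only assumption}. Hence $T_i^{\text{SPRT}}<\infty$ almost surely for every $i$. The one subtle point is that $\hat T_{\text{stop}}<\infty$ requires every active stream to keep being sampled; otherwise a stream could remain in $(-b,a)$ forever. At each time exactly one active stream is sampled, and each stream can absorb at most $T_i^{\text{SPRT}}$ samples before it becomes inactive. Therefore
\[
\hat T_{\text{stop}}\le \sum_{i\in[K]} T_i^{\text{SPRT}}<\infty \quad\text{a.s.}
\]
This yields the first claim of \eqref{a.s. finiteness and error control}.

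\textbf{Error bounds.} By the union bound,
\[
\Pro_B(\hat D\setminus B\neq\emptyset)\le\sum_{i\notin B}\Pro_B(D_i^{\text{SPRT}}=1).
\]
For $i\notin B$, stream $i$ has density $f_i$. The standard Wald change-of-measure argument gives
\[
\Pro_{f_i}(\lambda_i(T_i^{\text{SPRT}})\ge a)=\Exp_{g_i}\bigl[e^{-\lambda_i(T)}1\{\lambda_i(T)\ge a\}\bigr]\le e^{-a}.
\]
Here the change of measure is valid on $\{T_i^{\text{SPRT}}<\infty\}$, which has probability one under both $f_i$ and $g_i$. It can also be checked via the nonnegative martingale $e^{\lambda_i(n)}$ under $f_i$ and optional stopping with Fatou. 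Symmetrically, for $i\in B$ we get $\Pro_{g_i}(\lambda_i(T)\le -b)\le e^{-b}$, using that $e^{-\lambda_i(n)}$ is a martingale under $g_i$. Summing these bounds gives the remaining two inequalities of \eqref{a.s. finiteness and error control}.

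\textbf{Threshold choice.} Substituting \eqref{selection of thresholds} gives
\[
(K-|B|)\,\alpha/K\le\alpha, \qquad |B|\,\beta/K\le\beta .
\]
Together with almost sure termination, this shows $\hat\delta\in\Delta(\alpha,\beta)$.

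The main obstacle is justifying rigorously the distributional identity between the sampled LLR path and the continuously sampled one for an arbitrary adaptive, randomized $\cS$. I would handle it through the explicit product-space coupling described above, rather than through distributional equalities only.
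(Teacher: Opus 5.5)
Your proposal is correct and follows essentially the same route as the paper: you reduce to the per-stream SPRTs, obtain almost sure termination from the nonzero drifts, and apply the union bound with the classical SPRT error bounds $e^{-a}$ and $e^{-b}$. Your bound $\hat T_{\text{stop}}\le\sum_{i}T_i^{\text{SPRT}}$ is the same observation as the paper's remark that a stream sampled infinitely often would need its LLR to stay in $(-b,a)$ forever, and your explicit product-space reindexing and change-of-measure derivation of the SPRT bounds make rigorous two steps the paper takes for granted.
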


\begin{IEEEproof} [Proof of Theorem \ref{theorem, error control of the general decision rule}]
    Fix  $a,b>0$ and $B\subseteq[K]$.
    If $\hat T_{\text{stop}}=\infty$, then there must be a stream  $i\in[K]$ from which the number of samples is $\infty$, but whose LLR is always between $(-b,a)$.
    The probability of this event is zero, since every stream has a non-zero drift.     
    Thus, we have the first line of \eqref{a.s. finiteness and error control}. Next, we only show the second line of \eqref{a.s. finiteness and error control} as the third one is similar. 
    It is clear that $\{\hat D\backslash B\neq\emptyset\}=\{\exists\,i\notin B: i\in\hat D\}$.  
    Thus, by the union bound, 
    \begin{equation*}
    \begin{aligned}
        & \; \Pro_B(\hat D\backslash B\neq\emptyset) \leq \sum_{i\notin B} \Pro_B(i\in\hat D) \\
        = & \; \sum_{i\notin B} \Pro_B(D_i^{\text{SPRT}}=1) \leq (K-|B|) e^{-a},
    \end{aligned}
    \end{equation*}
    where the last inequality is a well-known bound for the type-I error probability of the SPRT. 
\end{IEEEproof}

It turns out that  the proposed termination and detection times, not only guarantee the desired error control, but also asymptotically optimize the  expected total sample size, 
irrespective of the choice of the sampling rule. This asymptotic optimality result is based on the following theorem.  

\begin{theorem} \label{theorem, upper bound on Tstopr}
For any $a,b>0$, $B\subseteq[K]$, $r\geq 1$ and $\epsilon\in(0,1)$ we have
\begin{equation} \label{upper bound on Tstopr}
\begin{aligned}
    & \, \Exp_B[(\hat T_{\text{stop}})^r] \\
    \leq & \;
    2^{r-1} \left( \left(\frac{1}{1-\epsilon} \left( \sum_{i\in B} \frac{a}{I_i} + \sum_{i\notin B} \frac{b}{J_i} \right) \right)^r + V_B(r,\epsilon) \right).
\end{aligned}
\end{equation}

\end{theorem}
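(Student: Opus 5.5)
The plan is to bound $\hat T_{\text{stop}}$ pathwise by a sum of per-stream sample counts and then bound each count using the random times $\cV_i^\pm(\epsilon)$ defined in \eqref{cVi's}.

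\emph{Pathwise decomposition.} Since exactly one stream is sampled at each time instant, $\hat T_{\text{stop}} = \sum_{i\in[K]} N_i$, where $N_i := \sum_{m=1}^{\hat T_{\text{stop}}} 1\{\cS(m)=i\}$ is the number of samples taken from stream $i$. A stream is sampled only while its LLR lies in $(-b,a)$. Hence $N_i$ is exactly the exit time of the continuously sampled random walk $\lambda_i(\cdot)$ from $(-b,a)$, evaluated on the subsequence of observations actually drawn from stream $i$. To make this precise, I would realize the procedure on a probability space where each stream $i$ is associated with an i.i.d. sequence, and the $j$-th sample from stream $i$ is the $j$-th element of that sequence. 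This changes neither $\Pro_B$ of the procedure nor the law of $\hat T_{\text{stop}}$, and it gives $N_i = T_i^{\text{SPRT}}$ pathwise. Once this identity holds pathwise, the sampling rule plays no further role.

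\emph{Per-stream bound.} Fix $i\in B$. If $T_i^{\text{SPRT}} = n \ge \cV_i^+(\epsilon)$, then $n-1 \ge \cV_i^+(\epsilon)$ as well, unless $n = \cV_i^+(\epsilon)$. In the former case, $n-1 > \sup\{m: \lambda_i(m)/m \le I_i(1-\epsilon)\}$, so $\lambda_i(n-1) > (n-1)I_i(1-\epsilon)$. But $\lambda_i(n-1) < a$ because the walk has not yet exited at time $n-1$. Therefore $n-1 < a/(I_i(1-\epsilon))$. In all cases,
\[
T_i^{\text{SPRT}} \le \frac{a}{I_i(1-\epsilon)} + \cV_i^+(\epsilon).
\]
A symmetric argument for $i\notin B$, using $-\lambda_i(n-1) < b$ and $\cV_i^-(\epsilon)$, gives $T_i^{\text{SPRT}} \le b/(J_i(1-\epsilon)) + \cV_i^-(\epsilon)$. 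The case $n=1$ is trivial. The only care needed is the off-by-one bookkeeping at $n-1$, which the "$+1$" in the definition of $\cV$ absorbs.

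\emph{Summation and moments.} Summing over $i$ gives $\hat T_{\text{stop}} \le L + \cV_B(\epsilon)$, where
\[
L := \frac{1}{1-\epsilon}\left(\sum_{i\in B} \frac{a}{I_i} + \sum_{i\notin B} \frac{b}{J_i}\right).
\]
By convexity, $(x+y)^r \le 2^{r-1}(x^r+y^r)$. Taking expectations under $\Pro_B$ and using $V_B(r,\epsilon) = \Exp_B[(\cV_B(\epsilon))^r]$ yields \eqref{upper bound on Tstopr}. I expect the main obstacle to be justifying the pathwise coupling $N_i = T_i^{\text{SPRT}}$ under an adaptive sampling rule. The cleaner route is to work throughout with the "stack of i.i.d. sequences" representation, so that $\lambda_i^\cS$ at the times stream $i$ is sampled coincides with $\lambda_i(\cdot)$ evaluated at the current count.
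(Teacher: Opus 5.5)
Your proposal is correct and follows essentially the same route as the paper: identify $\hat T_{\text{stop}}$ with $\sum_{i\in[K]} T_i^{\text{SPRT}}$, bound each $T_i^{\text{SPRT}}$ by $a/(I_i(1-\epsilon))+\cV_i^+(\epsilon)$ or $b/(J_i(1-\epsilon))+\cV_i^-(\epsilon)$, sum, and finish with the $C_r$-inequality; the paper cites Lemma \ref{Lemma: for AUB} for the per-stream bound, which you prove directly. Your explicit stack-of-i.i.d.-sequences coupling tightens the paper's first step: the paper only invokes equality of the marginal laws of the per-stream sample counts, whereas the moment of their sum depends on their joint law.
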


\begin{IEEEproof}
First, we observe that 
\begin{equation*}
   \hat T_{\text{stop}} = \sum_{i\in[K]} \sum_{m=1}^{T_{\text{stop}}} 1\{S(m)=i\},
\end{equation*}
and, in view of the previous remark,
$$
\Exp_B[(\hat T_{\text{stop}})^r] = \Exp_B\left[ \left(\sum_{i\in[K]} T_i^{\text{SPRT}}\right)^r \right].
$$
According to Lemma \ref{Lemma: for AUB},
for every $\epsilon\in(0,1)$ we have 
\begin{equation*}
\begin{aligned}
    T_i^{\text{SPRT}} \leq
    \begin{cases}
    \begin{aligned}
        & \frac{a}{I_i(1-\epsilon)} + \cV_i^+(\epsilon), && \text{for } i\in B, \\
        & \frac{b}{J_i(1-\epsilon)} + \cV_i^-(\epsilon), && \text{for } i\notin B, 
    \end{aligned}
    \end{cases}
\end{aligned}
\end{equation*}
so 
\begin{equation} \label{sum of TiSPRT}
    \sum_{i\in[K]} T_i^{\text{SPRT}} \leq \frac{1}{1-\epsilon} \left( \sum_{i\in B} \frac{a}{I_i} + \sum_{i\notin B} \frac{b}{J_i} \right) + \cV_B(\epsilon),
\end{equation}
where $\cV_i^+(\epsilon),\cV_i^-(\epsilon)$, $\cV_B(\epsilon)$ are defined in \eqref{cV's}-\eqref{cVi's}.
The desired result follows by the  $C_r$-inequality, which states that 
\begin{equation*}
    \Exp[(X+Y)^r] \leq 2^{r-1}(\Exp[X^r]+\Exp[Y^r])
\end{equation*}
for any non-negative random variables $X,Y$ and $r\geq 1$.

\end{IEEEproof}
    
\begin{corollary}
Suppose the thresholds $a,b$ selected so that $\hat\delta\in\Delta(\alpha,\beta)$ for all $\alpha,\beta\in(0,1)$ and $a\sim|\log\alpha|$, $b\sim|\log\beta|$ as $\alpha,\beta\to 0$, e.g., as in \eqref{selection of thresholds}.
Then, for any $B\subseteq[K]$ and  
$|B|< k\leq K$, as $\alpha,\beta\to 0$ we have
\begin{equation} \label{AO for k geq |B|}
\begin{aligned} 
     \Exp_B[  \hat T_k\wedge  \hat T_{\text{stop}}] & \sim \cJ_{B,k} (\alpha, \beta) \sim   \cJ_{B} (\alpha, \beta) \sim \Exp_B[  \hat T_{\text{stop}}] \\
      &\sim  \; 
    \sum_{i\in B} \frac{|\log\alpha|}{I_i} +\sum_{i\notin B} \frac{|\log\beta|}{J_i}
\end{aligned}   
\end{equation}
and 
\begin{equation} \label{E[r]1/r}
    (\Exp_B[( \hat T_{\text{stop}})^r])^{1/r}= O(a\vee b) \text{ for all } r\geq 1.
\end{equation}
\end{corollary}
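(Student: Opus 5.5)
The plan is a sandwich argument: the lower bound comes from the Corollary after Theorem \ref{theorem: ALB}, and the upper bound comes from Theorem \ref{theorem, upper bound on Tstopr}. Fix $B\subseteq[K]$ and $|B|<k\leq K$.

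\emph{Lower side.} Since $\hat\delta\in\Delta(\alpha,\beta)$, the definitions of the infima give
\[
\Exp_B[\hat T_k\wedge\hat T_{\text{stop}}]\geq \cJ_{B,k}(\alpha,\beta)
\quad\text{and}\quad
\Exp_B[\hat T_{\text{stop}}]\geq \cJ_B(\alpha,\beta).
\]
The Corollary following Theorem \ref{theorem: ALB}, via \eqref{coro: asym_lower_bound}, gives
\[
\cJ_B(\alpha,\beta)\geq\cJ_{B,k}(\alpha,\beta)\gtrsim L(\alpha,\beta):=\sum_{i\in B}\frac{|\log\alpha|}{I_i}+\sum_{i\notin B}\frac{|\log\beta|}{J_i}.
\]
We also have the trivial inequality $\Exp_B[\hat T_k\wedge \hat T_{\text{stop}}]\leq \Exp_B[\hat T_{\text{stop}}]$. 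All four quantities in \eqref{AO for k geq |B|} therefore lie between $L(\alpha,\beta)(1+o(1))$ and $\Exp_B[\hat T_{\text{stop}}]$. It remains to show $\Exp_B[\hat T_{\text{stop}}]\lesssim L(\alpha,\beta)$.

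\emph{Upper side.} Apply Theorem \ref{theorem, upper bound on Tstopr} with $r=1$. The factor $2^{r-1}$ equals $1$, so for every $\epsilon\in(0,1)$,
\[
\Exp_B[\hat T_{\text{stop}}]\leq \frac{1}{1-\epsilon}\Big(\sum_{i\in B}\frac{a}{I_i}+\sum_{i\notin B}\frac{b}{J_i}\Big)+V_B(1,\epsilon).
\]
By Lemma \ref{Lemma: properties of LLR}, $V_B(1,\epsilon)<\infty$, and it does not depend on $\alpha$ or $\beta$. Substitute $a\sim|\log\alpha|$ and $b\sim|\log\beta|$, divide by $L(\alpha,\beta)\to\infty$, and take $\limsup$ as $\alpha,\beta\to0$. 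This gives $\limsup \Exp_B[\hat T_{\text{stop}}]/L\leq 1/(1-\epsilon)$. Letting $\epsilon\downarrow 0$ completes the chain \eqref{AO for k geq |B|}.

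The one point needing care is the ratio step. $L$ is a weighted combination of $|\log\alpha|$ and $|\log\beta|$ with positive weights. Hence $a\leq(1+\eta)|\log\alpha|$ and $b\leq(1+\eta)|\log\beta|$ for small $\alpha,\beta$ imply that the bracketed sum is at most $(1+\eta)L$, uniformly in how $\alpha$ and $\beta$ approach $0$. If $B=\emptyset$ or $B=[K]$, one of the two sums in $L$ is empty, and the same argument goes through with the empty sum dropped.

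\emph{Moment bound.} For \eqref{E[r]1/r}, fix $\epsilon=1/2$ in Theorem \ref{theorem, upper bound on Tstopr}. The bracketed sum is at most $C(a\vee b)$ with $C=\sum_i(1/I_i+1/J_i)$, so
\[
\Exp_B[(\hat T_{\text{stop}})^r]\leq 2^{r-1}\big((2C(a\vee b))^r+V_B(r,1/2)\big).
\]
Taking $r$-th roots and using $(x+y)^{1/r}\leq x^{1/r}+y^{1/r}$, the right-hand side becomes $O(a\vee b)+O(1)$. Since $a\vee b\to\infty$, this is $O(a\vee b)$.

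I expect no real obstacle here. The only mildly delicate part is the uniformity in the joint limit $\alpha,\beta\to0$, and positivity of the weights in $L$ handles it.
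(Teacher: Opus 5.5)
Your proposal is correct and follows the paper's own argument. You bound $\Exp_B[\hat T_{\text{stop}}]$ from above via Theorem \ref{theorem, upper bound on Tstopr} (large thresholds, then $\epsilon\to 0$), bound everything from below via \eqref{coro: asym_lower_bound}, and sandwich the four quantities using $\cJ_{B,k}\le\Exp_B[\hat T_k\wedge\hat T_{\text{stop}}]\le\Exp_B[\hat T_{\text{stop}}]$ and $\cJ_{B,k}\le\cJ_B\le\Exp_B[\hat T_{\text{stop}}]$. Your fixed-$\epsilon=1/2$ derivation of the moment bound and your explicit uniformity check in the joint limit are simply more careful versions of the paper's steps; note that your aside about $B=[K]$ is vacuous, since $|B|<k\le K$.
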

\begin{IEEEproof}
Fix $B\subseteq[K]$ and $|B|<k\leq K$. 
Note that $V_B(r,\epsilon)$ is finite for all $r\geq 1$ and $\epsilon\in(0,1)$ based on Lemma \ref{Lemma: properties of LLR}.
Letting first $a,b\to\infty$ and then  $\epsilon\to 0$ in \eqref{upper bound on Tstopr}, we have
\begin{equation*}
    (\Exp_B[(\hat T_{\text{stop}})^r])^{1/r} \lesssim 2^{1-1/r} \left( \sum_{i\in B} \frac{a}{I_i} +  \sum_{i\notin B} \frac{b}{J_i} \right)= O(a\vee b).
\end{equation*}
Setting $r=1$ and comparing with the asymptotic lower bound in \eqref{coro: asym_lower_bound} proves \eqref{AO for k geq |B|}.
\end{IEEEproof}

\subsection{The sampling rule} \label{subsec: the proposed sampling rule}
We established that, given
the detection/termination times in the previous subsection, the choice of the sampling rule does not affect the asymptotic attainment of the  optimal expected \textit{total sample size}. However, the sampling rule  is critical for the quick  detection of  signals,  and the  lower bounds in \eqref{ALB for tauk} can provide  useful insights for its design. Indeed, these  bounds suggest that one should first sample the easiest signal until its detection,  then the second easiest signal until its detection, etc., where the difficulty level is quantified by the ``signal KL divergences'', that is, the KL divergences of the signal distributions against the noise distributions.
Since we do not know a priori which streams are signals,
we order \textit{all streams} according to their signal KL divergences, and switch sampling  when there is weak evidence that the current stream is a signal.

Specifically, we first order the streams in the decreasing order of their signal KL divergences, i.e., without loss of generality, assume $I_1\geq I_2\geq \cdots \geq I_K$.
We  first sample stream $1$  until its LLR either exceeds the detection threshold $a>0$  or is below a non-positive threshold $-b'\leq 0$, which is  not smaller than the one used for the identification of noises, i.e., $b\geq b'$.  In the former case stream $1$ is identified as a signal, and in the latter no call is made, and we repeat the same process for streams $2,\ldots,K$. In other words, the proposed sampling rule satisfies
\begin{equation} \label{proposed sampling rule, Phase I, 1}
    \hat\cS(n) := i \text{ for } i\in[K] \text{ and } \tau_{i-1} < n\leq \tau_i ,
\end{equation}
where $\tau_0 := 0$ and
\begin{equation} \label{proposed sampling rule, Phase I, 2}
    \tau_i:= \inf\{ n> \tau_{i-1}: \lambda_i^{\hat\cS}(n)\notin(-b',a) \} \text{ for } i\in[K].
\end{equation}



To understand the role of the tuning parameter $b'$, it is useful to consider its two extreme cases, $b'=b$ and $b'=0$.

If $b'=b$, then the proposed sampling rule, combined with the detection times in \eqref{general Tk} and the termination time in \eqref{general Tstop}, leads to the following procedure: keep sampling stream $1$ until a decision is made for it, then do the same for stream $2,\ldots,K$. This procedure is going to be very efficient 
when the true signals are the ones with the largest signal KL numbers, 
i.e., when $B=\{1, \ldots, |B|\}$.  However, it can be very  inefficient  otherwise. 
Indeed, if for example $B=\{K\}$, then it will identify all noise streams before even starting to sample the  signal stream.  

If $b'=0$, then sampling moves to the next stream once the  LLR of the current stream becomes negative.
However, there is a constant, positive probability for the LLR to take negative values with the first sample. 
This means that there is always a positive, bounded from zero, probability of detecting more difficult signal streams before easier ones.


To sum up, setting $b'=b$ is a non-robust choice with respect to the unknown subset of signals, and 
setting $b'=0$ does not guarantee sufficient  exploration in identifying the easiest signals. 
The following theorem  provides a resolution to this trade-off.


\begin{theorem} \label{theorem: AUB}
Suppose the sampling rule $\hat S$ satisfies \eqref{proposed sampling rule, Phase I, 1}- \eqref{proposed sampling rule, Phase I, 2}, where $0\leq b'\leq b$.
Fix $\epsilon\in(0,1)$.  For any $B\subseteq[K]$ and $1\leq k\leq |B|$ we have
    \begin{equation} \label{non-asy upper bound on EB[Tk]}
    \begin{aligned}
        \Exp_B[ \hat T_k\wedge  \hat T_{\text{stop}}] & \leq \frac{1}{1-\epsilon} \sum_{i=1}^k \frac{a}{I_{(i)}(B)} + V_B(1,\epsilon) \\
        & + O(b') + O(a\vee b) \cdot O(e^{-a\wedge b'}).
    \end{aligned}
    \end{equation}
\end{theorem}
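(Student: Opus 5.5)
Split according to the \emph{good event} $G$ on which Phase I behaves like the oracle: every signal stream $i\in B$ exits Phase I through $a$ rather than $-b'$, and every noise stream $i\notin B$ exits through $-b'$ rather than $a$. On $G$, the $k$-th detection happens during Phase I. Write $B=\{j_1<j_2<\dots<j_{|B|}\}$; since $I_1\ge\dots\ge I_K$, $I_{j_\ell}=I_{(\ell)}(B)$. On $G$, $\hat T_k=\tau_{j_k}$ is the total number of Phase I samples spent on streams $1,\dots,j_k$.

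The samples spent on stream $j_\ell\in B$ equal the time for its continuously sampled LLR $\lambda_{j_\ell}(n)$ to reach $a$. On $G$ it never goes below $-b'$, so its exit from $(-b',a)$ coincides with its exit from $(-b,a)$. Hence this count is at most $T^{\text{SPRT}}_{j_\ell}\le a/(I_{j_\ell}(1-\epsilon))+\cV^+_{j_\ell}(\epsilon)$ by Lemma \ref{Lemma: for AUB}, applied exactly as in the proof of Theorem \ref{theorem, upper bound on Tstopr}.

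For a noise stream $i\notin B$ with $i<j_k$, the number of samples is the exit time of $\lambda_i$ from $(-b',a)$. This is at most $\inf\{n:\lambda_i(n)\le -b'\}\le b'/(J_i(1-\epsilon))+\cV^-_i(\epsilon)$, by the same lemma argument with $b$ replaced by $b'$. Summing over $i\le j_k$ and bounding the $\cV$ terms by $\cV_B(\epsilon)$ gives, on $G$,
\[
\hat T_k\wedge\hat T_{\text{stop}}\le \frac{1}{1-\epsilon}\sum_{\ell=1}^k\frac{a}{I_{(\ell)}(B)}+\sum_{i\notin B}\frac{b'}{J_i(1-\epsilon)}+\cV_B(\epsilon).
\]
Taking expectations yields the first three terms of \eqref{non-asy upper bound on EB[Tk]}: the middle sum is $O(b')$, and $\Exp_B[\cV_B(\epsilon)]=V_B(1,\epsilon)$.

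On $G^c$, use the crude bound $\hat T_k\wedge \hat T_{\text{stop}}\le \hat T_{\text{stop}}$ together with Cauchy--Schwarz:
\[
\Exp_B[\hat T_{\text{stop}}1_{G^c}]\le (\Exp_B[\hat T_{\text{stop}}^2])^{1/2}\Pro_B(G^c)^{1/2}.
\]
By \eqref{E[r]1/r} with $r=2$, which holds for any sampling rule via Theorem \ref{theorem, upper bound on Tstopr}, the first factor is $O(a\vee b)$. For $\Pro_B(G^c)$, apply a union bound over streams and Wald's likelihood-ratio change-of-measure bound for the two-sided test on $(-b',a)$. For $i\in B$, $\Pro_B(\lambda_i \text{ hits } -b' \text{ before } a)\le e^{-b'}$; for $i\notin B$, $\Pro_B(\lambda_i\text{ hits } a\text{ before } -b')\le e^{-a}$. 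Thus $\Pro_B(G^c)\le K e^{-a\wedge b'}$, independent of the sampling rule because each stream's samples in Phase I are just its own i.i.d. sequence.

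The main obstacle is that Cauchy--Schwarz only produces $O(e^{-(a\wedge b')/2})$, not $O(e^{-a\wedge b'})$. To get the stated rate, I would instead use Hölder with exponents $(r,r/(r-1))$ and $r$ large. If the exact exponent must be matched, a sharper route is to decompose $G^c$ by the first stream whose Phase I exit is "wrong". Conditionally on that event, which is independent of the remaining streams' samples, bound the remaining time by a sum of SPRT-type sample sizes with expectation $O(a\vee b)$, plus the overshoot of the offending stream. This gives $\Exp_B[\hat T_{\text{stop}}1_{G^c}]\le O(a\vee b)\Pro_B(G^c)$ directly. Checking that this conditional expectation is really $O(a\vee b)$ — in particular, for the offending stream's own continuation after exiting at $-b'$ — is the step I expect to be delicate. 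I would handle it by bounding the offending stream's total count by its unconstrained $T_i^{\text{SPRT}}$, whose conditional mean given the rare exit is still linear in $a\vee b$ by Lemma \ref{Lemma: properties of LLR}.
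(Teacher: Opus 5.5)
Your argument on the good event $G$ (which is the paper's $\Gamma=\{\hat D(\tau_K)=B\}$), the union bound $\Pro_B(G^c)\le Ke^{-a\wedge b'}$, and the crude bound by $\hat T_{\text{stop}}$ on $G^c$ follow the paper's proof exactly. Your concern about the exponent is well founded, and it exposes a slip in the paper. The paper writes $\Exp_B[\hat T_{\text{stop}}1\{\Gamma^c\}]\le(\Exp_B[(\hat T_{\text{stop}})^r])^{1/r}\,\Pro_B(\Gamma^c)$, but H\"older only gives the factor $\Pro_B(\Gamma^c)^{1-1/r}$. So the paper's argument, like your Cauchy--Schwarz/H\"older fallback, yields only $O(a\vee b)\cdot O(e^{-(1-1/r)(a\wedge b')})$ for each fixed $r>1$. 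That is enough for Corollary~\ref{corollary, AO}, but not for the exponent as displayed.

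Your sharper route is the right fix, but the step you flag should not be argued through conditional means and Lemma~\ref{Lemma: properties of LLR}. That lemma controls only unconditional moments, so pairing it with the rare event brings the H\"older loss back. Moreover, a linear conditional mean would need a lower bound $\Pro_B(G_j^c)\gtrsim e^{-b'}$, which the lemma does not give.

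Avoid conditioning altogether. Write $\hat T_{\text{stop}}=\sum_i N_i$, where $N_i$ (the SPRT time of stream $i$'s own samples) and $G_i$ depend only on stream $i$, and use $1_{G^c}\le\sum_j 1_{G_j^c}$. This also sidesteps the fact that the event ``first wrong stream is $j$'' depends on streams $1,\dots,j-1$.
\begin{itemize}
\item For $i\ne j$, independence gives $\Exp_B[N_i1_{G_j^c}]=\Exp_B[N_i]\,\Pro_B(G_j^c)=O(a\vee b)\,e^{-a\wedge b'}$.
\item For $i=j$, let $\sigma_j$ be the exit time of stream $j$'s LLR from $(-b',a)$, and apply Wald's likelihood-ratio identity to this segment only.
\item If $j\in B$, then $\Exp_B[\sigma_j1_{G_j^c}]\le e^{-b'}\,\Exp_{B\setminus\{j\}}[\sigma_j]=e^{-b'}O(b'+1)$. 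By the strong Markov property, $N_j-\sigma_j$ is dominated by a fresh first-passage time to $a+b$, contributing $\Pro_B(G_j^c)\,O(a+b)$.
\item If $j\notin B$, then $N_j=\sigma_j$ on $G_j^c$, and $\Exp_B[\sigma_j1_{G_j^c}]\le e^{-a}\,\Exp_{B\cup\{j\}}[\sigma_j]=e^{-a}O(a)$.
\end{itemize}
Lemma~\ref{Lemma: for AUB} bounds every first-passage expectation above. Summing over $i,j$ gives exactly $O(a\vee b)\cdot O(e^{-a\wedge b'})$.
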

\begin{IEEEproof}
    See Appendix \ref{appendix, proofs}.
\end{IEEEproof}


\begin{remark}
   If $b= O(a)$, then
    the higher-order term in \eqref{non-asy upper bound on EB[Tk]}, omitting multiplicative constants, is $b'+a e^{-b'}$.
    Thus, a rate-optimal selection of $b'$  is $\Theta(\log a)$. This agrees with our intuition that $b'$ should be closer to $0$ than $b$. 
\end{remark}

Combining Theorem \ref{theorem: ALB} and \ref{theorem: AUB}, next corollary follows.
\begin{corollary} \label{corollary, AO}
Suppose the  sampling rule $\hat S$ satisfies \eqref{proposed sampling rule, Phase I, 1}- \eqref{proposed sampling rule, Phase I, 2},  
and the thresholds $a,b>0$ are selected so that $\hat\delta\in\Delta(\alpha,\beta)$ for all $\alpha,\beta\in(0,1)$ and $a\sim|\log\alpha|$, $b\sim|\log\beta|$ as $\alpha,\beta\to 0$, e.g., as in \eqref{selection of thresholds}. 
Moreover, suppose $b'$ is selected so that $0<b'<b$ and $b'\to\infty$, $b'= o(a)$ as $\alpha,\beta\to 0$.
Then, for any $B\subseteq[K]$ and  $1\leq k\leq |B|$ we have
\begin{align*}
    \Exp_B[ \hat T_k\wedge \hat T_{\text{stop}}] & \sim \cJ_{B,k}(\alpha, \beta) \sim  \sum_{i=1}^k \frac{ |\log\alpha|}{I_{(i)}(B)},
\end{align*}
as $\alpha,\beta\to 0$ so that $|\log\beta|= O(|\log\alpha|)$. 
\end{corollary}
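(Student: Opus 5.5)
The corollary is a sandwich argument: the lower bound comes from Corollary (asymptotic lower bound) \eqref{coro: asym_lower_bound_k}, and the upper bound comes from Theorem \ref{theorem: AUB}. Fix $B$ and $1\le k\le |B|$.

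By assumption $\hat\delta\in\Delta(\alpha,\beta)$. Hence $\Exp_B[\hat T_k\wedge\hat T_{\text{stop}}]\ge \cJ_{B,k}(\alpha,\beta)$, and \eqref{coro: asym_lower_bound_k} gives
\[
\cJ_{B,k}(\alpha,\beta)\gtrsim \sum_{i=1}^k |\log\alpha|/I_{(i)}(B).
\]
It therefore remains to show
\[
\Exp_B[\hat T_k\wedge\hat T_{\text{stop}}]\lesssim \sum_{i=1}^k |\log\alpha|/I_{(i)}(B).
\]
Once this holds, all three quantities are asymptotically equivalent.

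For the upper bound, fix $\epsilon\in(0,1)$ and apply \eqref{non-asy upper bound on EB[Tk]}. We then check that every remainder term is $o(|\log\alpha|)$.
\begin{itemize}
\item $V_B(1,\epsilon)$ is finite by Lemma \ref{Lemma: properties of LLR} and does not depend on $\alpha,\beta$, so it is $O(1)$.
\item $O(b')=o(a)=o(|\log\alpha|)$, by the assumptions $b'=o(a)$ and $a\sim|\log\alpha|$.
\item $O(a\vee b)\cdot O(e^{-a\wedge b'})$: the assumption $|\log\beta|=O(|\log\alpha|)$, together with $a\sim|\log\alpha|$ and $b\sim|\log\beta|$, gives $a\vee b=O(|\log\alpha|)$. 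Since $b'=o(a)$, eventually $a\wedge b'=b'$, and $b'\to\infty$ gives $e^{-b'}\to 0$. The product is therefore $o(|\log\alpha|)$.
\end{itemize}
Dividing by $\sum_{i\le k}|\log\alpha|/I_{(i)}(B)$ and using $a/|\log\alpha|\to1$ yields
\[
\limsup \frac{\Exp_B[\hat T_k\wedge\hat T_{\text{stop}}]}{\sum_{i\le k}|\log\alpha|/I_{(i)}(B)}\le \frac{1}{1-\epsilon}.
\]
Letting $\epsilon\to0$ completes the upper bound.

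The only delicate point is the cross term $O(a\vee b)\,O(e^{-a\wedge b'})$. Here the condition $|\log\beta|=O(|\log\alpha|)$ is essential: without it, $b$ could grow much faster than $a$, and $e^{-b'}$ might not compensate. One must also ensure that the implicit constants in the $O(\cdot)$ terms of Theorem \ref{theorem: AUB} do not depend on $\alpha,\beta$, but only on $B$, $K$, $\epsilon$ and the distributions. I take this as part of that theorem's statement.
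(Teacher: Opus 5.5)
Your proposal is correct and is essentially the paper's argument: it sandwiches $\Exp_B[\hat T_k\wedge\hat T_{\text{stop}}]$ between $\cJ_{B,k}(\alpha,\beta)$, bounded below via \eqref{coro: asym_lower_bound_k}, and the bound \eqref{non-asy upper bound on EB[Tk]} of Theorem \ref{theorem: AUB}, letting $a,b,b'\to\infty$ with $b=O(a)$ and $b'=o(a)$, and then $\epsilon\to 0$. The only difference is that you check term by term that each remainder is $o(|\log\alpha|)$, a verification the paper leaves implicit.
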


\begin{IEEEproof} [Proof of Corollary \ref{corollary, AO}]
Fix  $B\subseteq[K]$ and  $1\leq k\leq |B|$. 
 Letting first $a,b,b'\to\infty$  so that $b= O(a)$ and $b'= o(a)$ and then  $\epsilon\to 0$ in \eqref{non-asy upper bound on EB[Tk]}, we have \begin{align*}
      \Exp_B[\hat T_k\wedge  \hat T_{\text{stop}}] & \lesssim \sum_{i=1}^k \frac{a}{I_{(i)}(B)}.   
  \end{align*}
    Comparing  with the asymptotic lower bounds in  \eqref{coro: asym_lower_bound_k} completes the proof. 
\end{IEEEproof}

\subsection{How to sample in Phase II}
We have established the desired asymptotic optimality properties by specifying the sampling rule up to time $\tau_K$, defined in \eqref{proposed sampling rule, Phase I, 2}. We refer to $[0, \tau_K]$ as Phase I of our sampling rule.  At $\tau_K$,  each LLR is either above $a$ or  below $-b'$. In the former case,  the  stream has been identified as a signal. In the latter, it will have been identified as a noise only if its LLR is also smaller than $-b$. 
Thus, the subset of active signals at time $\tau_K$, $\hat A(\tau_K)$, is in general non-empty, and we will need to continue their sampling. We refer to this as Phase II of our sampling rule.   While this choice does not affect asymptotic optimality, it can have a significant impact on 
the actual performance.  

Since  we are  interested in the quick detection of all signals,  a natural approach is to prioritize sampling the (possibly weak) signals that we may have failed to identify in Phase I.  For this, we propose sampling at each time instant the stream with the largest LLR, i.e., 
\begin{equation} \label{proposed sampling rule, Phase II}
   \hat\cS(n) := \argmax_{i\in \hat A(n-1)} \lambda_i^{\hat\cS}(n-1) \text{ for } n > \tau_K.
\end{equation}
 In this way,  streams with positive drift will be prioritized over streams with negative drift. We refer to this sampling rule 
 as ``follow-the-leader".

A number of other options are available. For example, an alternative approach may be to sample at each time instant the active  stream whose LLR has the  largest absolute value, i.e., 
\begin{equation*}
    \hat\cS(n) := \argmax_{i\in \hat A(n-1)} |\lambda_i^{\hat\cS}(n-1)| \text{ for } n > \tau_K,
\end{equation*}
in order to identify all remaining streams as quickly as possible, 
so that the number of undetermined streams will reduce fast.
If it is desirable to reduce the   number of switchings, then we may simply   sample at each time instant ``in-order", that is,  the active  stream with the smallest index:
\begin{equation*}
    \hat\cS(n) := \min \{ \hat A(n-1)\} \text{ for } n>\tau_K,
\end{equation*}

Here, we follow the convention that if there are multiple maximizers or minimizers, the $\argmax$ and $\argmin$  operators return the smallest index of the maximizers and minimizers, respectively.  
A pseudocode of the whole procedure is presented in Algorithm \ref{the proposed algorithm}.



\subsection{An alternative sampling rule} 
\label{subsec: follow-the-leader sampling rule}

An alternative to the proposed sampling rule is to \textit{always} ``follow-the-leader":
\begin{equation} \label{cohen's sampling rule}
   \check S(n) := \argmax_{i\in \check A(n-1)} \lambda_i^{\check\cS}(n-1) \text{ for every } n> 1.
\end{equation}
This rule was  proposed in \citep[Section IV.B]{Kobi_2015_active}. It coincides with the proposed one
if the exploration threshold $b'$  in Phase I  is set to $0$ (recall \eqref{proposed sampling rule, Phase I, 2}), and the sampling rule \eqref{proposed sampling rule, Phase II} is applied in Phase II.
As we discussed in Subsection \ref{subsec: the proposed sampling rule}, this procedure is prone to missing easily-detectable signals.
However, it is worth pointing out that it is  asymptotically efficient  for the detection of \emph{all signals}. Indeed, it achieves the infimum in \eqref{objective funcs, Tk} for $k=|B|$. 
To show this, we rely on    the following upper
bound, whose proof is based on  \citep[Section IV.B]{Kobi_2015_active}. 
To distinguish this procedure from our proposed one, we denote its components with a check symbol $\check\cdot$.

\begin{theorem} \label{theorem, cohen's}
    Suppose the sampling rule $\check S$ is given by \eqref{cohen's sampling rule}, and the detection times $\{\check T_k:k\in[K]\}$ and the termination time $\check T_{\text{stop}}$ by  \eqref{general Tk}-\eqref{general Tstop}.
    For any non-empty set $B\subseteq[K]$ we have
    \begin{align} \label{AUB, cohen's proc}
        \Exp_B[\check T_{|B|}\wedge  \check T_{\text{stop}}] & \lesssim \sum_{i\in B} \frac{a}{I_i}, 
    \end{align}
    as $a,b\to\infty$ so that $b= O(a)$. 
\end{theorem}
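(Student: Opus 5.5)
Under the follow-the-leader rule, once a signal's LLR is above a certain level and every noise LLR has drifted down, the noises are effectively never sampled again; so the time to detect all signals is at most the time spent on the signals themselves plus a noise contribution of order $o(a)$. To make this quantitative, fix $\epsilon\in(0,1)$ and work on the event that every stream has ``settled,'' using the random times $\cV_i^{\pm}(\epsilon)$ from \eqref{cVi's}. For a signal $i\in B$, after $n\ge \cV_i^+(\epsilon)$ own samples its LLR satisfies $\lambda_i\ge I_i(1-\epsilon)n>0$. For a noise $j\notin B$, after $m\ge\cV_j^-(\epsilon)$ own samples, $\lambda_j\le -J_j(1-\epsilon)m<0$. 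Let $M_j:=\sup_{m\ge1}\lambda_j(m)$ denote the maximum of the noise random walk. Under $\Pro_B$ it has finite moments: it is bounded by $\max_{m<\cV_j^-}\lambda_j(m)$, and this is controlled through Lemma \ref{Lemma: properties of LLR}, or more simply by the running sum of positive parts up to $\cV_j^-(\epsilon)$, whose expectation is finite by Wald's identity.

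\textbf{Step 1 (noise samples).} The key claim is that the number of samples taken from a noise $j$ before $\check T_{|B|}\wedge\check T_{\text{stop}}$ is at most $N_j:=\cV_j^-(\epsilon)+ c\,(\text{some }O(1)\text{ random quantity with finite mean})$, and in particular $\Exp_B[N_j]=O(1)$, independent of $a,b$. The argument is as follows. While some signal $i$ is still active, its LLR at the current time is at least $\min_n\lambda_i(n)\ge -L_i$, where $L_i:=-\inf_n\lambda_i(n)$ has finite mean under $\Pro_B$ (a positive-drift walk). Noise $j$ is sampled only if $\lambda_j$ is at least the maximum active LLR, hence at least $-L_i$. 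After $m\ge\cV_j^-(\epsilon)$ own samples, however, $\lambda_j\le -J_j(1-\epsilon)m$. Therefore the number of samples of $j$ is at most $\max\{\cV_j^-(\epsilon),\, L_i/(J_j(1-\epsilon))\}+1$ for any still-active signal $i$, and taking the sum over $i\in B$ suffices. Both terms have finite expectation, since $\Exp[\cV_j^-]<\infty$ by Lemma \ref{Lemma: properties of LLR} and $\Exp[L_i]<\infty$ follows from the same lemma via $L_i\le \sum_{n<\cV_i^+}|\text{increments}|$ together with Wald's identity.

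\textbf{Step 2 (signal samples).} Each signal $i\in B$ is sampled at most $T_i^{\text{SPRT}}$ times, because the local LLR of stream $i$ evolves exactly as in the continuously sampled SPRT. By Lemma \ref{Lemma: for AUB}, $T_i^{\text{SPRT}}\le a/(I_i(1-\epsilon))+\cV_i^+(\epsilon)$.

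\textbf{Step 3 (combining).} Before $\check T_{|B|}\wedge\check T_{\text{stop}}$, every time instant is spent either on a signal or on a noise. Hence
\[
\check T_{|B|}\wedge\check T_{\text{stop}}\le \sum_{i\in B}\Big(\frac{a}{I_i(1-\epsilon)}+\cV_i^+(\epsilon)\Big)+\sum_{j\notin B}N_j .
\]
Taking expectations gives $\Exp_B[\check T_{|B|}\wedge\check T_{\text{stop}}]\le (1-\epsilon)^{-1}\sum_{i\in B}a/I_i+O(1)$. Letting $a,b\to\infty$ and then $\epsilon\to0$ yields \eqref{AUB, cohen's proc}. The condition $b=O(a)$ is not even needed in this route, except possibly to absorb the event that a signal is wrongly rejected at $-b$. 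On that event, the remaining signals still satisfy the same accounting, so the bound holds pathwise.

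The main obstacle is Step 1: pinning down rigorously that a noise can only be sampled while its LLR dominates some active signal's LLR, and that this is controlled by a finite-mean quantity. The difficulty arises when all signals have already been decided, but then $\check T_{|B|}\wedge\check T_{\text{stop}}$ has already occurred. It also arises when ties occur, which are handled by the smallest-index convention and do not affect the inequality. If the moment bound on $L_i$ is not directly available from Lemma \ref{Lemma: properties of LLR}, I would bound it by $\sum_{n\le \cV_i^+(\epsilon)}|\log(g_i/f_i)(X_i(n))|$ and apply Wald's identity, using $\Exp_B[\cV_i^+]<\infty$.
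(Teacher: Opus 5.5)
Your accounting route differs from the paper's and can be made to work, but Step 1 has a real hole exactly at the case you dismiss. Two of your claims are false. First, a noise is not sampled before $\check T_{|B|}\wedge\check T_{\text{stop}}$ only while some signal is active. Second, $\check T_{|B|}\wedge\check T_{\text{stop}}$ has not necessarily occurred once all signals are decided. Suppose some $i_0\in B$ is rejected. Once the other signals are detected, at most $|B|-1$ detections have been made (absent a false detection), so $\check T_{|B|}$ has not occurred. Follow-the-leader then keeps sampling the remaining noises until they are decided. In that phase no active signal supplies the floor $-L_i$. Your remark that ``the remaining signals still satisfy the same accounting'' says nothing about these noise samples, whose number can be of order $b$. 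The missing observation is that on this event $L_{i_0}\ge b$, since the walk of stream $i_0$ went below $-b$, while every active noise has LLR $>-b$. So in every case a noise $j$ sampled before $\check T_{|B|}\wedge\check T_{\text{stop}}$ has LLR at least $-\max_{i\in B}L_i$. Hence
\[
N_j\le\max\Big\{\cV_j^-(\epsilon),\ \frac{\max_{i\in B}L_i}{J_j(1-\epsilon)}\Big\}+1
\]
holds pathwise, and false rejections need no separate treatment.

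Three smaller repairs are needed. First, Wald's identity cannot give $\Exp_B[L_i]<\infty$, because $\cV_i^{\pm}(\epsilon)$ are last-exit times, not stopping times. Also, under \eqref{only assumption} the absolute increments need not have more than a first moment. Use negative parts instead. With $Z_m:=\log(g_i/f_i)(X_i(m))$ you have $L_i\le\sum_{m\ge1}Z_m^-1\{\cV_i^+(\epsilon)>m\}$. Moreover, $Z^-$ has an exponential moment under $g_i$ because $\Exp_{g_i}[e^{-Z}]\le 1$. Cauchy--Schwarz and the Chernoff bound of Lemma \ref{Lemma: properties of LLR} then give $\Exp_B[L_i]\le\|Z^-\|_2\sum_{m\ge1}\Pro_B(\cV_i^+(\epsilon)>m)^{1/2}<\infty$. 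Second, your inequalities are pathwise across several streams, so state them for the per-stream sample sequences (the $k$th observation taken from stream $i$). These are i.i.d.\ and independent across streams because $\check S(n)$ is predictable; in the paper's time indexing, $\lambda_i^{\check S}(n)$ equals $\lambda_i(N_i^{\check S}(n))$ only in distribution. Third, note what the repaired argument yields. For all $a,b>0$ you get $\Exp_B[\check T_{|B|}\wedge\check T_{\text{stop}}]\le\frac{1}{1-\epsilon}\sum_{i\in B}a/I_i+C_\epsilon$, with $C_\epsilon$ free of $a,b$. This is self-contained and stronger than the paper's proof. The paper splits on $\{\check D=B\}$ and cites Cohen and Zhao for the expected last-detection time $\Exp_B[\check T_{\text{det}}]$. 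It then controls $\{\check D\neq B\}$ by H\"older, using $\Pro_B(\check D\neq B)\le Ke^{-a\wedge b}$ and $(\Exp_B[(\check T_{\text{stop}})^r])^{1/r}=O(a\vee b)$, and that is one place where $b=O(a)$ enters. Your route confirms that this condition is not actually needed.
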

\begin{IEEEproof}
    See Appendix \ref{appendix, proofs}.
\end{IEEEproof}

Combining Theorem \ref{theorem: ALB} and \ref{theorem, cohen's}, the next corollary follows.
\begin{corollary} \label{corollary, cohen's}
    Suppose the thresholds $a,b>0$ are selected so that $\check\delta\in\Delta(\alpha,\beta)$ for all $\alpha,\beta\in(0,1)$ and $a\sim|\log\alpha|$, $b\sim|\log\beta|$ as $\alpha,\beta\to 0$, e.g., as in \eqref{selection of thresholds}. 
    Then, for any non-empty $B\subseteq[K]$ we have
    \begin{align*}
        \Exp_B[ \check T_{|B|}\wedge \check T_{\text{stop}}] & \sim \cJ_{B, |B|}(\alpha, \beta)  \sim  \sum_{i\in B} \frac{ |\log\alpha|}{I_i}
    \end{align*}
    as $\alpha,\beta\to 0$ so that $|\log\beta|= O(|\log\alpha|)$.
\end{corollary}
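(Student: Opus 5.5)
The corollary follows by sandwiching $\Exp_B[\check T_{|B|}\wedge\check T_{\text{stop}}]$ between two bounds we already have. Fix a non-empty $B\subseteq[K]$ and set $k=|B|$.

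For the \emph{lower bound}, by assumption $\check\delta\in\Delta(\alpha,\beta)$, so the definition \eqref{objective funcs, Tk} gives $\Exp_B[\check T_{|B|}\wedge\check T_{\text{stop}}]\geq \cJ_{B,|B|}(\alpha,\beta)$. With $k=|B|$, the asymptotic bound \eqref{coro: asym_lower_bound_k} then reads
\begin{equation*}
\cJ_{B,|B|}(\alpha,\beta)\gtrsim \sum_{i=1}^{|B|}\frac{|\log\alpha|}{I_{(i)}(B)}=\sum_{i\in B}\frac{|\log\alpha|}{I_i}.
\end{equation*}
Here the sum over the ordered divergences is just a rearrangement of the sum over all of $B$.

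For the \emph{upper bound}, I would invoke Theorem \ref{theorem, cohen's}. Its hypotheses are that $a,b\to\infty$ with $b=O(a)$, and both hold in the present regime. Since $a\sim|\log\alpha|$ and $b\sim|\log\beta|$, both thresholds diverge as $\alpha,\beta\to0$. Moreover, the assumption $|\log\beta|=O(|\log\alpha|)$ gives $b=O(a)$. The theorem therefore yields $\Exp_B[\check T_{|B|}\wedge\check T_{\text{stop}}]\lesssim\sum_{i\in B}a/I_i$. Replacing $a$ by $|\log\alpha|(1+o(1))$ gives the upper bound $\sum_{i\in B}|\log\alpha|/I_i$.

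Chaining the two inequalities,
\begin{equation*}
\sum_{i\in B}\frac{|\log\alpha|}{I_i}\lesssim\cJ_{B,|B|}(\alpha,\beta)\leq\Exp_B[\check T_{|B|}\wedge\check T_{\text{stop}}]\lesssim\sum_{i\in B}\frac{|\log\alpha|}{I_i},
\end{equation*}
so all three quantities are asymptotically equivalent. This proves the claim.

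There is no real obstacle, because all the substance lies in Theorem \ref{theorem, cohen's}. The only care needed is to check that the asymptotic regime $\alpha,\beta\to0$ with $|\log\beta|=O(|\log\alpha|)$ translates exactly into that theorem's regime $a,b\to\infty$ with $b=O(a)$, which it does through $a\sim|\log\alpha|$ and $b\sim|\log\beta|$.
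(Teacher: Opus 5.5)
Your proposal is correct and follows the paper's own argument: lower-bound $\Exp_B[\check T_{|B|}\wedge\check T_{\text{stop}}]$ by $\cJ_{B,|B|}(\alpha,\beta)$ and then by the universal bound with $k=|B|$ (using $d(\beta,\alpha)\sim|\log\alpha|$), and upper-bound it by Theorem~\ref{theorem, cohen's}. The paper's one-line proof is this comparison. Your check that $\alpha,\beta\to0$ with $|\log\beta|=O(|\log\alpha|)$ gives $a,b\to\infty$ with $b=O(a)$ is the only step it leaves implicit.
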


\begin{IEEEproof} [Proof of Corollary \ref{corollary, AO}]
    It suffices to compare 
    the asymptotic lower bound in \eqref{ALB for tauk} with the asymptotic upper bound in \eqref{AUB, cohen's proc}, with $a,b$ satisfying the conditions.
\end{IEEEproof}

\section{Numerical studies} \label{section: numerical studies}
In this section, we present some numerical studies.
First, we visualize the effect of parameter $b'$ on the performance of the proposed procedure. 
Then, we compare the performance of the proposed procedure with the ``follow-the-leader" procedure in Section \ref{subsec: follow-the-leader sampling rule} and the oracle procedure that applies an SPRT with thresholds $a,b$ to the streams in the correct order, i.e., first signals in the non-increasing order of their KL divergences, then noises in any order.

The data model is as follows:
For each $i\in[K]$, we assume that $X_i$ are i.i.d. Gaussian, with unknown mean $\mu_i$ and unit variance, and the testing problem of interest is 
\begin{equation*}
    \mu_i=0 \quad \text{versus} \quad
    \mu_i=\delta_i,
\end{equation*}
for some $\delta_i>0$.
We fix $K=10$, $\boldsymbol{\delta}=(\delta_1,\ldots,\delta_{10})=(1.5,1.5,1.25,1.25,1,1,0.75,0.75,0.5,0.5)$, and we set the true, unknown subset of signals as $B=\{2,4,6,8,10\}$.
Note that the streams are ordered in a non-increasing order of their KL divergences, and the true signals appear later than the true noises, which is a more difficult setup for the two practical procedures.
We always use equal thresholds $a=b$.

\begin{figure}
    \centering
    \includegraphics[width=\linewidth]{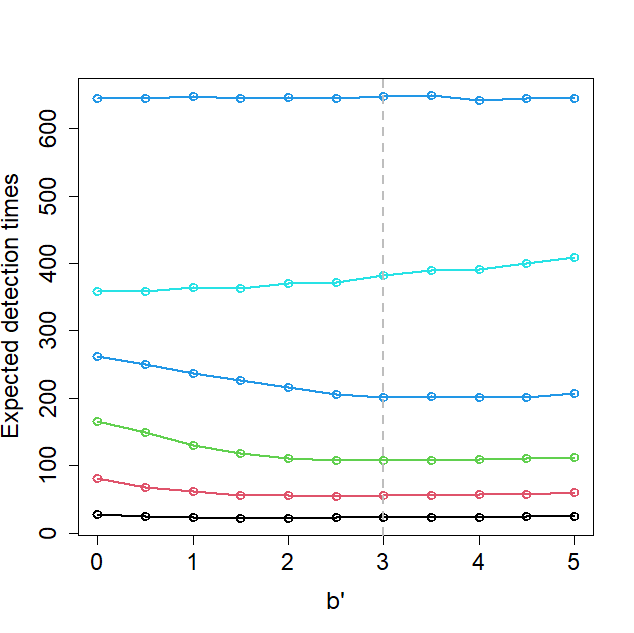}
    \caption{Expected detection times of the proposed procedure, i.e., $\Exp_k[\hat T_k\wedge \hat T_{\text{stop}}]$ for $k\in[K]$, against $b'$.
    Curves from bottom to top: $k=1$, ..., $k=5$ and $k=6,7,8,9,10$ which basically overlap.
    The vertical, dashed, gray line: the value of $\log(a)$.
    }
    \label{fig: ESS_b'}
\end{figure}

\begin{figure}
    \centering
    \includegraphics[width=0.49\linewidth]{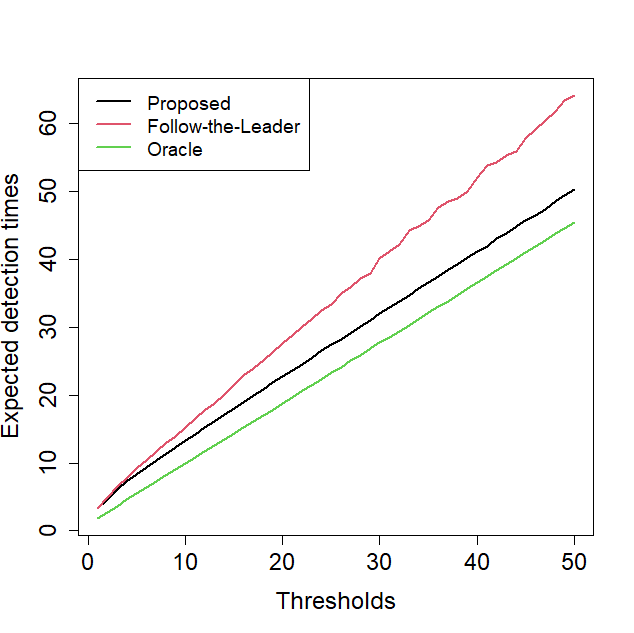}
    \includegraphics[width=0.49\linewidth]{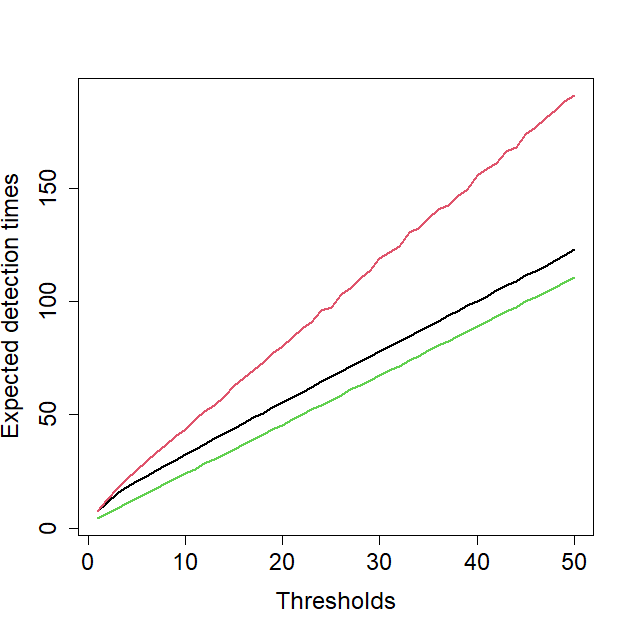} \\
    \includegraphics[width=0.49\linewidth]{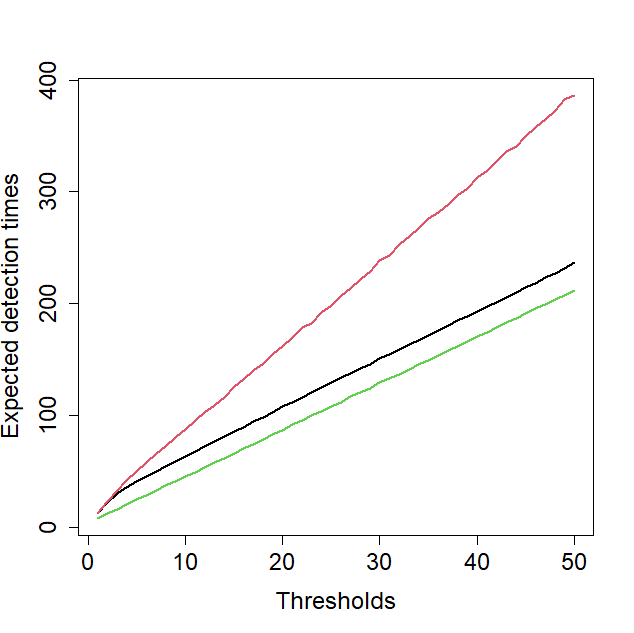}
    \includegraphics[width=0.49\linewidth]{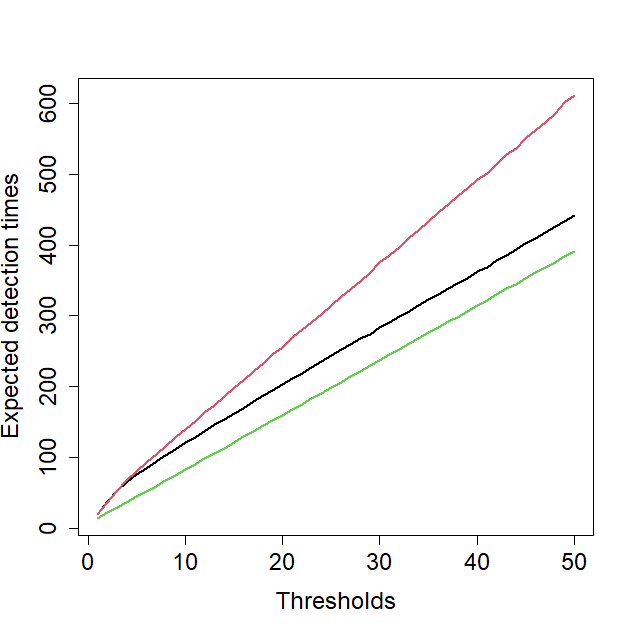} \\
    \includegraphics[width=0.49\linewidth]{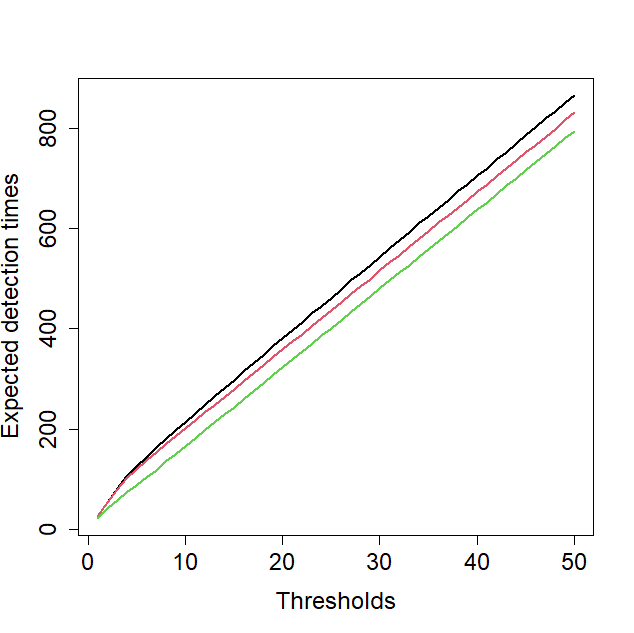}
    \includegraphics[width=0.49\linewidth]{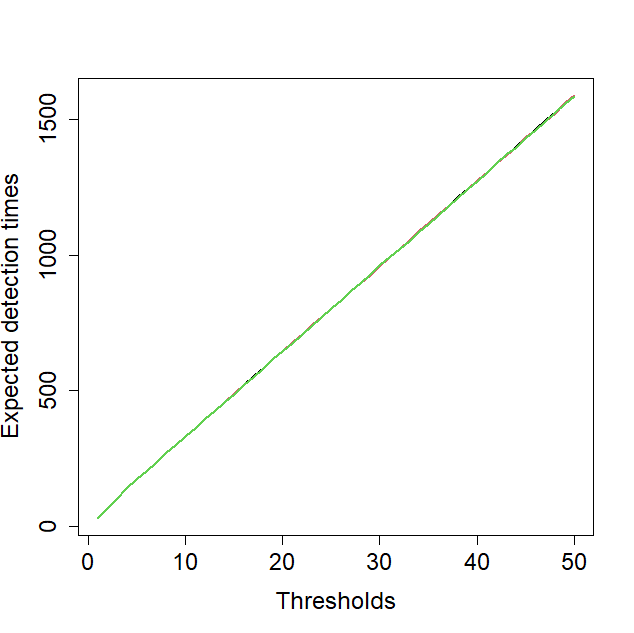} 
    \caption{
    Expected detection times of the three procedures against thresholds $a=b$.
    The three lines correspond to the three procedures, as shown in the legend of the first subfigure.
    The six subfigures, from left to right and from top to bottom, correspond to, $k=1$, ... $k=5$, and $k=6,7,8,9,10$.
    }
    \label{fig: comparison}
\end{figure}

\begin{figure}
    \centering
    \includegraphics[width=0.49\linewidth]{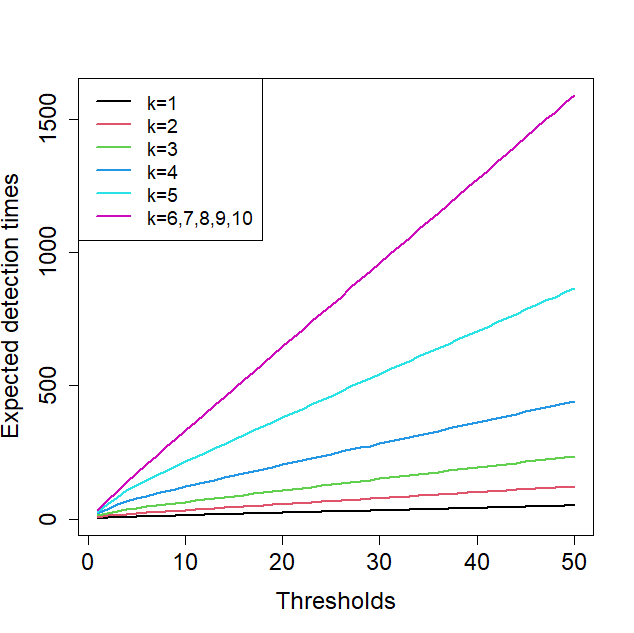}
    \includegraphics[width=0.49\linewidth]{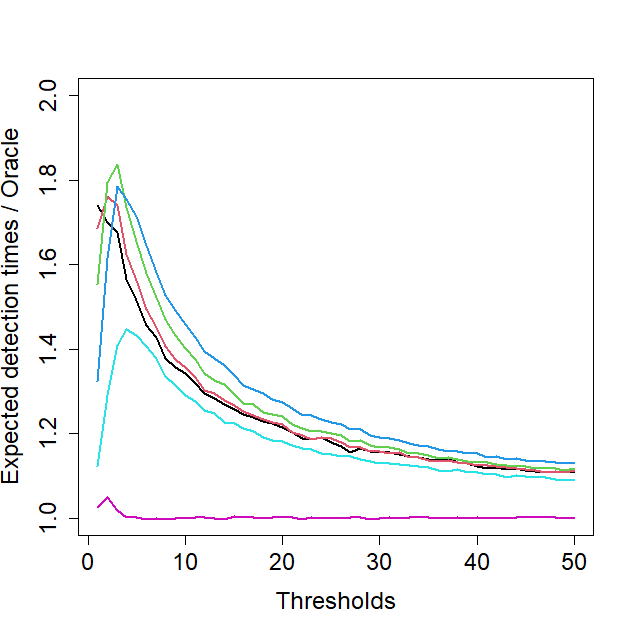} 
    \caption{
    Expected detection times of the proposed procedure (left) and expected detection times of the proposed procedure divided by those of the oracle procedure (right), against thresholds $a=b$.
    }
    \label{fig: ratios}
\end{figure}

In Figure \ref{fig: ESS_b'} we set $a=b=20$ and plot the expected detection times and the expected termination time of the proposed procedure, i.e., $\Exp_B[\hat T_k\wedge \hat T_{\text{stop}}]$ for $k\in[K]$ and $\Exp_B[\hat T_{\text{stop}}]$, against $b'$.
Upper lines correspond to greater $k$, and the lines for $\Exp_B[\hat T_k\wedge \hat T_{\text{stop}}]$, $5=|B|< k\leq K=10$ and for $\Exp_B[\hat T_{\text{stop}}]$ basically overlap. 
Note that the best selection of $b'$ slightly moves to the right as $k\leq|B|-1$ increases. 
A reason is that signals with smaller positive drifts are more likely to go below $-b'$ and be missed, so they require larger $b'$ to guarantee that they are sampled until detection in Phase I. 
Overall, the performance of the proposed procedure is quite insensitive to the selection of $b'$, and the rate-optimal selection, $\log(a)$, drawn as a vertical, dashed, gray line in Figure \ref{fig: ESS_b'}, is quite reasonable. 
Thus, we adopt this selection when comparing with other procedures in the next numerical study. 

In Figure \ref{fig: comparison} we compare the three procedures.
The six subfigures from left to right and from top to bottom correspond to $\Exp_B[ T_k\wedge T_{\text{stop}}]$ for $k=1$, ..., $k=5$ and $k=6,7,8,9,10$, where the last subfigure also correspond to $\Exp_B[ T_{\text{stop}}]$.
The three lines of three different colors correspond to the three procedures, as shown in the legend of the first subfigure. 
We can see that the proposed procedure outperforms the ``follow-the-leader" procedure significantly for $1\leq k\leq |B|-1=4$, and slightly underperforms for $k=|B|=5$.
The latter is reasonable, because in order to minimize the expected time until detecting all signals, exploration is not necessary.
All procedures perform basically the same for $k\geq |B|+1$ and for the expected time until termination, which are equal to the sum of the $K$ SPRTs. 

In Figure \ref{fig: ratios}, we plot the expected detection times of the proposed procedure and the ratios of the expected detection times of the proposed procedure divided by those of the oracle procedure, against thresholds.
We can see that the former are basically linear in the thresholds, and the latter converge to one, corroborating the asymptotic optimality theory.

\section{Conclusion and future directions} \label{section: conclusion}
In this work, we propose and solve an active signal detection problem, where multiple independent data streams are present,  only one can be observed at every time instant, and the goal is to  minimize not only the expected total sample size, but also the  expected time until the $k_{th}$ detection  for every $k$,
while controlling the two types of familywise error rates below arbitrary, user-specified levels. 
Next, we discuss some potential extensions of this work.

In this work, we postulate two simple hypotheses for every stream. 
As a result,  we can order  the streams, as in Section \ref{subsec: the proposed sampling rule}, in terms of their detection difficulty (as if they are true signals). The extension to 
unknown signal and noise distributions is a very interesting one. Such a problem is studied in 
\cite{Kobi_2020_composite}, where a ``follow-the-leader"  approach (in the spirit of Section \ref{subsec: follow-the-leader sampling rule}) is studied,  under the assumption that  it is a priori known that there is exactly one signal. With possibly multiple signals, this procedure will be efficient in detecting all signals, but poor in detecting easier ones, similarly to the phenomenon in this work. Some references in this direction include 
\cite{OAI, ControlledSensing_Composite}. 

Other directions of interest include the case where it is possible to sample  multiple streams at a time, and/or there is prior information regarding the number of signals \citep{Kobi_2015_active, Aris_IEEE, PaperIII}. 
In both cases, a procedure that minimizes the same set of objective functions as in the present work does not seem to exist in general, and a new problem formulation is needed.

 Finally, another direction is to consider dependent streams and/or non-i.i.d. data within streams. 
Without the assumption of independence across streams and/or the assumption of i.i.d. within streams, many derivations in the current work fail. 
\cite{ControlledSensing_NonUniformCost, PaperI} and \cite{chaudhuri2024joint} are some possible references.

\appendices
\section{Likelihood ratios} \label{appendix, llr}
For any sampling rule $\cS$, two subsets $B,C\subseteq[K]$ and time $n\geq 1$, we denote by $\lambda_{B,C}^\cS(n)$ the log-likelihood ratio between $\Pro_B$ and $\Pro_C$ based on the observations collected by $\cS$ up to time $n$. Based on the i.i.d. assumption within streams and independent assumption across streams, this  can be written as 
\begin{equation*}
\begin{aligned}
    \lambda_{B,C}^\cS(n) & := \lambda_{B,C}^\cS(n-1) \\
    + & \sum_{i\in B\backslash C} \log \left( \frac{g_i(X_i(n))}{f_i(X_i(n))} \right) \,  1\{\cS(n)=i\} \\
    - & \sum_{i\in C\backslash B} \left( \log\frac{g_i(X_i(n))}{f_i(X_i(n))} \right) \,  1\{\cS(n)=i\},
\end{aligned}
\end{equation*}
with  $\lambda_{B,C}^\cS(0):= 0$. Recalling the statistics 
$\lambda_i^\cS(n)$, $i\in[K]$ defined in \eqref{LLR_k}, it follows that
\begin{equation*}
    \lambda_{B,C}^\cS(n) = \sum_{i\in B\backslash C} \lambda_i^\cS(n) - \sum_{i\in C\backslash B} \lambda_i^\cS(n),
\end{equation*}
which reveals that 
$\lambda_{B,C}^\cS(n)$ reduces to  $\lambda_i^\cS(n) $  when $B=\{i\}$ and $C=\emptyset$ and reduces to $-\lambda_i^\cS(n)$ when $B=\emptyset$ and $C=\{i\}$.


We also note that for any sampling rule $\cS$ and any  $B,C\subseteq[K]$ the sequence 
\begin{equation*}
    \lambda_{B,C}^\cS(n)
    -\sum_{i\in B\backslash C} I_i N_i^\cS(n) - \sum_{i\in C\backslash B} J_i N_i^\cS(n), \; n \geq 0
\end{equation*}
is an $\cF^\cS$-martingale with mean zero under $\Pro_B$,
where 
$$N_i^\cS(n):=\sum_{m=1}^n 1\{\cS(m)=i\}$$ denotes the number of times stream $i$ is sampled up to time $n$.
As a result, for any $\Pro_B$-integrable $\cF^\cS$-stopping time $T$, by  the optional stopping theorem (see, e.g., Chapter 13.2 of \cite{measure_theory_book}) 
we have 
    \begin{align} \label{walds_id}
    \begin{split}
        & \Exp_B[\lambda_{B,C}^\cS(T)] \\
        = & \sum_{i\in B\backslash C} I_i \, \Exp_B[N_i^\cS(T)] + \sum_{i\in C\backslash B} J_i \, \Exp_B[N_i^\cS(T)].
    \end{split}
    \end{align}
Besides, by the information-theoretical lower bound in Lemma 3.2.1 of \cite{Tartakovsky_2003}, the following holds for any $\cF^\cS(T)$-measurable event $\Gamma$:
\begin{equation} \label{information-theoretical}
    \Exp_B[\lambda^\cS_{B,C}(T)] \geq d(\Pro_B(\Gamma),\Pro_C(\Gamma^c)),
\end{equation}
where $d(\cdot,\cdot)$ has been defined in \eqref{func d} and $\Gamma^c$ represents the complement of $\Gamma$.
In addition, since only one stream is sampled at each time instant, we have 
\begin{equation} \label{constraint}
    \left\{\frac{\Exp_B[N_i^\cS(T)]}{\Exp_B[T]}, \; i\in [K]\right\} \in W_K,
\end{equation}
where 
 $W_K:=\{(w_1,\ldots,w_K)\in[0,1]^K: w_1+\cdots+w_K=1\}
 $ denotes all discrete probability distributions on $[K]$.


\section{Proofs of Main Results} \label{appendix, proofs}
\begin{IEEEproof} [Proof of Theorem \ref{theorem: ALB}]
    Fix $\alpha,\beta\in(0,1)$ so that $\alpha+\beta<1$, $\delta\in\Delta(\alpha,\beta)$, and $B\subseteq[K]$.
    In what follows, we always assume that the stopping time under consideration is $\Pro_B$-integrable, because otherwise the lower bound holds trivially.
    


1) Suppose $B\neq\emptyset$ and fix $1\leq k\leq |B|$.
Let $C \subseteq B$ so that $|C|\leq k-1$. 
By \eqref{walds_id}  with $T= T_k\wedge T_{\text{stop}}$  we  have 
\begin{align*}
    & \; \Exp_B[\lambda_{B,C}^\cS (T_k\wedge T_{\text{stop}})] \\ 
    = & \; \Exp_B[ T_k\wedge T_{\text{stop}}] \sum_{i\in B\backslash C} I_i \frac{\Exp_B[N_i^\cS(T_k\wedge T_{\text{stop}})]}{\Exp_B[ T_k\wedge T_{\text{stop}}]}.
\end{align*}
Meanwhile, since $T_k$ and $T_{\text{stop}}$ are $\cF^\cS$-stopping times, we have $\{T_k> T_{\text{stop}}\}\in\cF^\cS(T_k\wedge T_{\text{stop}})$ and by \eqref{information-theoretical} we have
    \begin{equation*}
        \Exp_B\left[\lambda_{B,C}^\cS(T_k\wedge T_{\text{stop}})\right]  \geq d(\Pro_B(  T_k> T_{\text{stop}}), \Pro_C(  T_k\leq  T_{\text{stop}})).
    \end{equation*}
    Event $\{T_k>T_{\text{stop}}\}$ implies that the total number of detections is less than $k$ and $\{T_k\leq T_{\text{stop}}\}$ implies that the total number of detections is at least $k$.
    Since $|C|<k\leq |B|$, when the true subset of signals is $B$ the former event makes at least one type-II error and when the true subset of signals is $C$ the latter event makes at least one type-I error. Thus, $\Pro_B(T_k> T_{\text{stop}})\leq \beta$ and $\Pro_C(T_k\leq T_{\text{stop}})\leq\alpha$.
    Moreover, the function $d(x,y)$ 
    is decreasing in both arguments when $x,y\in(0,1)$ and $x+y<1$. Thus, we conclude that 
    \begin{equation*}
    \begin{aligned}
        d(\Pro_B(  T_k> T_{\text{stop}}), \Pro_C(  T_k\leq  T_{\text{stop}})) 
        & \geq d(\beta,\alpha).
    \end{aligned}
    \end{equation*}
    Combining the previous three results, 
    taking the worst case over  $C$  and recalling \eqref{constraint} we conclude that 
    \begin{equation*}
\Exp_B\left[T_k\wedge T_{\text{stop}} \right] \geq\frac{d(\beta, \alpha)} 
 {     \sup\limits_{w\in W_K}\inf\limits_{C\subseteq B: |C| \leq k-1} \sum\limits_{i\in B\backslash C} w_i I_i}.
    \end{equation*}
    Note that the summation in the denominator decreases with the size of $C$, so it is equal to 
    \begin{equation*}
    \begin{aligned}
        & \sup\limits_{w\in W_B} \, \inf\limits_{C\subseteq B: |C|= k-1} \, \sum\limits_{i\in B\backslash C} w_i I_i \\
        = & \sup\limits_{w\in W_B} \, \inf\limits_{C\subseteq B: |C|= |B|-k+1}  \, \sum\limits_{i\in C} w_i I_i = \left( \sum_{i=1}^k \frac{1}{I_{(i)}(B)} \right)^{-1},
    \end{aligned}
    \end{equation*}
    where the second equality follows from Lemma \ref{lemma, soluation to the max-min problem}. 

(2) 
We now let $B\neq[K]$ and
fix $|B|+1\leq k\leq K$.
    For any $C\subsetneqq B$,
    by \eqref{walds_id}  we have
    \begin{equation*}
    \begin{aligned}
        & \; \Exp_B[\lambda_{B,C}^\cS( T_k\wedge T_{\text{stop}})] \\
        = & \; \Exp_B[ T_k\wedge T_{\text{stop}}] \sum_{i\in B\backslash C} I_i \frac{\Exp_B[N_i^\cS( T_k\wedge T_{\text{stop}})]}{\Exp_B[ T_k\wedge T_{\text{stop}}]}, 
    \end{aligned}
    \end{equation*}
    and
    \begin{equation*}
    \begin{aligned}
        & \; \Exp_B[\lambda_{B,C}^\cS( T_k\wedge T_{\text{stop}})] \\
        \geq & \; d(\Pro_B(  T_{|B|}> T_k\wedge T_{\text{stop}}), \Pro_C(  T_{|B|}\leq T_k\wedge T_{\text{stop}})) \\
        = & \; d(\Pro_B(  T_{|B|}>T_{\text{stop}}), \Pro_C(  T_{|B|}\leq T_{\text{stop}})) \geq d(\beta,\alpha),
    \end{aligned}
    \end{equation*}
    where the first step is because
    $$\{T_{|B|}> T_k\wedge T_{\text{stop}}\} \in \cF^\cS(T_{|B|}\wedge T_k\wedge T_{\text{stop}})\subseteq \cF^\cS(T_k\wedge T_{\text{stop}}),$$ 
    the second because, since $T_{|B|}\leq T_k$,
    \begin{equation*}
    \begin{aligned}
        \{T_{|B|}> T_k\wedge T_{\text{stop}}\} & = \{T_{|B|}>T_k \text{ or } T_{|B|}>T_{\text{stop}}\} \\
        & =\{T_{|B|}>T_{\text{stop}}\}, \\
        \{T_{|B|}\leq T_k\wedge T_{\text{stop}}\} & = \{T_{|B|}\leq T_k \text{ and } T_{|B|}\leq T_{\text{stop}}\} \\
        & =\{T_{|B|}\leq T_{\text{stop}}\},
    \end{aligned}
    \end{equation*}
    and the third because
    at least one type-II error is made when the true number of signals is $|B|$ but $T_{|B|}>T_{\text{stop}}$, and at least one type-I error is made when the true number of signals is $|C|<|B|$ but $T_{|B|}\leq T_{\text{stop}}$.
    
    Similarly, for any $B\subsetneqq C$ we have
    \begin{equation*}
    \begin{aligned}
        & \; \Exp_B[\lambda_{B,C}^\cS( T_k\wedge T_{\text{stop}})] \\
        = & \; \Exp_B[ T_k\wedge T_{\text{stop}}] \sum_{i\in C\backslash B} J_i \frac{\Exp_B[N_i^\cS( T_k\wedge T_{\text{stop}})]}{\Exp_B[ T_k\wedge T_{\text{stop}}]}, 
    \end{aligned}
    \end{equation*}
    and 
    \begin{equation*}
    \begin{aligned}
        & \; \Exp_B[\lambda_{B,C}^\cS( T_k\wedge T_{\text{stop}})] \\
        \geq & \; d(\Pro_B(  T_{|B|+1}\leq T_k\wedge T_{\text{stop}}), \Pro_C(  T_{|B|+1}> T_k\wedge T_{\text{stop}})) \\
        = & \; d(\Pro_B(  T_{|B|+1}\leq T_{\text{stop}}), \Pro_C(  T_{|B|+1}> T_{\text{stop}})) \geq d(\alpha,\beta).
    \end{aligned}
    \end{equation*}
    
    Combining the two lower bounds, we have
    \begin{equation*}
        \Exp_B[ T_k\wedge T_{\text{stop}}] \geq \inf_{w\in W_K}\left\{ \max\left\{ \frac{d(\beta,\alpha)}{\inf\limits_{i\in B} w_i I_i}, \frac{d(\alpha,\beta)}{\inf\limits_{i\notin B} w_i J_i} \right\} \right\}.
    \end{equation*}
    This infimum is achieved when all of $\{d(\beta,\alpha)/w_i I_i: i\in B\}$ and $\{d(\alpha,\beta)/w_i J_i,\,i\notin B\}$ are equal. Subject to the constraint that $w\in W_K$, the desired lower bound can be computed.
\end{IEEEproof}

\begin{IEEEproof} [Proof of Theorem \ref{theorem: AUB}]
Fix $B\subseteq[K]$, $a>0$, $0\leq b'\leq b$,  $\epsilon\in(0,1)$ and $1\leq k\leq |B|$.  Let $\Gamma$ denote the event that the detected signals at time $\tau_K$ are those in $B$, that is,
\begin{equation*}
        \Gamma := \{\hat D(\tau_K) =B\}.
    \end{equation*}
    We decompose
    \begin{equation} \label{decom}
    \begin{aligned}
         \hat T_k\wedge  \hat T_{\text{stop}} &=  \hat T_k\wedge  \hat T_{\text{stop}} \cdot 1\{\Gamma\} +  \hat T_k\wedge  \hat T_{\text{stop}} \cdot 1\{\Gamma^c\}\\
        &\leq   \hat T_k \cdot 1\{\Gamma\} +   \hat T_{\text{stop}} \cdot 1\{\Gamma^c\}.
    \end{aligned}
    \end{equation}

    We start with the first term. 
    For convenience, we denote by
    $B_k$ the subset of streams in $B$ with the smallest $k$ indices. Formally, $B_k:=\{l\in B: l\leq i_k(B)\}$,  where $i_k(B)$ is the  $k_{th}$ smallest index in $B$. Moreover, we denote by    $B'_k$ the subset of streams in  $B^c$ whose indices do not exceed $i_k(B)$, that is, $B'_k:=\{l\in B^c: l\leq i_k(B)\}$.
    Then, on the event  $\Gamma$, the streams that have been sampled up to time $ \hat T_k$ are those in
    $B_k$ and   $B'_k$, and   \begin{equation*}
    \begin{aligned}
         \hat T_k & = \sum_{i\in B_k}(\inf\{n\geq \tau_{i-1}+1: \lambda_i^{ \cS}(n)\geq a\} - \tau_{i-1}) \\
        & + \sum_{i\in B'_k}(\inf\{n\geq \tau_{i-1}+1: \lambda_i^{ \cS}(n)\leq -b'\} - \tau_{i-1}),
    \end{aligned}
    \end{equation*}
    which
    has the same distribution under $\Pro_B$ as 
    \begin{equation*}
    \begin{aligned}
        \sum_{i\in B_k}\inf\{n\geq 1: \lambda_i(n)\geq a\} +\sum_{i\in B'_k}\inf\{n\geq 1: \lambda_i(n)\leq -b'\}.
    \end{aligned}
    \end{equation*}
    Based on Lemma \ref{Lemma: for AUB}, this is upper bounded by 
    \begin{equation*}
    \begin{aligned}
        & \; \sum_{i\in B_k} \left(\frac{a}{I_i} \cdot \frac{1}{1-\epsilon} + \cV_i^+(\epsilon)\right) + \sum_{i\in B'_k} \left(\frac{b'}{J_i} \cdot \frac{1}{1-\epsilon} + \cV_i^-(\epsilon)\right) \\
        \leq & \; \left(\sum_{i=1}^k \frac{a}{I_{(i)}(B)} + \sum_{i\notin B} \frac{b'}{J_i} \right) \cdot \frac{1}{1-\epsilon} + \cV_B(\epsilon).
    \end{aligned}
    \end{equation*}
    Thus, 
    \begin{equation*}
    \begin{aligned}
        & \; \Exp_B[ \hat T_k \cdot 1\{\Gamma\}] \\
        \leq & \; \left(\sum_{i=1}^k \frac{a}{I_{(i)}(B)} + \sum_{i\notin B} \frac{b'}{J_i} \right) \cdot \frac{1}{1-\epsilon} + V_B(1,\epsilon).
    \end{aligned}
    \end{equation*}

    We continue with the second term in \eqref{decom}.
    By H\"older's inequality, for any $r>1$,
    \begin{equation*}
        \Exp_B[\hat T_{\text{stop}}\cdot 1\{\Gamma^c\}] \leq \left(\Exp_B[(\hat T_{\text{stop}})^r]\right)^{1/r} \Pro_B(\Gamma^c).
    \end{equation*}
    By \eqref{E[r]1/r} we know that $(\Exp_B[( \hat T_{\text{stop}})^r])^{1/r}= O(a\vee b)$ for all $r\geq 1$.
    It remains to upper bound $\Pro_B(\Gamma^c)$. 
    Note that
$$ \Gamma^c   = \bigcup_{i \in B} \{ i \notin  \hat D(\tau_K)\} \cup  \bigcup_{i \notin B} \{ i \in  \hat D(\tau_K)\}.
$$
By the union bound we have 
$$\Pro_B(\Gamma^c)\leq \sum_{i \in B} \Pro_B( i \notin  \hat D(\tau_K)) + \sum_{i \notin B} \Pro_B( i \in  \hat D(\tau_K)).$$
For $i \in B$ we have 
$$\Pro_B(i\notin \hat D(\tau_i))=\Pro_B(D_i^{\text{SPRT}}=0)\leq e^{-b'},$$
and for $i \notin B$
we have
$$\Pro_B(i\in \hat D(\tau_i))=\Pro_B(D_i^{\text{SPRT}}=1)\leq e^{-a}.$$
Therefore, 
$$ \Pro_B(\Gamma^c)\leq |B|e^{-b'}+(K-|B|)e^{-a}\leq K e^{-a\wedge b'}.$$



Combining the above results, the desired upper bound in \eqref{non-asy upper bound on EB[Tk]} follows. 
\end{IEEEproof}

\begin{IEEEproof} [Proof of Theorem \ref{theorem, cohen's}]
  For any $1 \le k \leq |B|$ we have
   \begin{equation*}
    \begin{aligned}
       \check T_{k}\wedge \check T_{\text{stop}} 
        = & \; \check T_{k}\wedge \check T_{\text{stop}} \cdot 1\{\check D=B\} + \check T_{k}\wedge \check T_{\text{stop}} \cdot 1\{\check D\neq B\} \\
        \leq & \; \check T_{\text{det}} + \check T_{\text{stop}} \cdot 1\{\check D\neq B\},
    \end{aligned}
    \end{equation*}
    where
    \begin{equation*}
    \begin{aligned}
        \check T_{\text{det}} := \max \{\check T_k: \check T_k\leq \check T_{\text{stop}}, \, k\in[K]\}.
    \end{aligned}
    \end{equation*}
     By H\"older's inequality, for any $r>1$,
    \begin{equation*}
    \begin{aligned}
        \Exp_B[\check T_{k}\wedge \check T_{\text{stop}}] 
        \leq & \; \Exp_B[\check T_{\text{det}}] + \left(\Exp_B[(\check T_{\text{stop}})^r]\right)^{1/r} \Pro_B(\check D\neq B).
    \end{aligned}
    \end{equation*}
     From \citep[Section IV.B]{Kobi_2015_active} we know 
    \begin{equation*}
        \Exp_B[\check T_{\text{det}}] \lesssim \sum_{i\in B} \frac{a}{I_i} \text{ as } a,b\to\infty \text{ so that } b= O(a).
      \end{equation*}   
    Similarly to the proof of Theorem \ref{theorem: AUB} and Theorem \ref{theorem, error control of the general decision rule}, we have 
    \begin{align*} 
        (\Exp_B[( \check T_{\text{stop}})^r])^{1/r}= O(a\vee b) \text{ as } a,b\to\infty,\\
        \Pro_B(\check D\neq B)\leq |B|e^{-b} + (K-|B|)e^{-a} \leq K e^{-a\wedge b}.
    \end{align*} 
    The desired result follows after plugging in.
\end{IEEEproof}



\section{Supporting lemmas}
\begin{lemma} \label{lemma, soluation to the max-min problem}
    Let $\{I_i,\,i\in[K]\}$ be $K\geq 1$ non-increasingly ordered positive real numbers, i.e., $I_1\geq \cdots \geq I_K>0$.
    Then, for any $i\in[K]$,
    \begin{equation*}
        \sup_{w\in W_K} \inf_{C\subseteq[K]:\,|C|=k} \sum_{i\in C} w_i I_i = \frac{1}{1/I_1+\cdots+1/I_{K-k+1}},
    \end{equation*}
    which is attained by 
    \begin{equation*}
        w_i = 
        \begin{cases}
        \begin{aligned}
            & \frac{1/I_i}{1/I_1+\cdots+1/I_{K-k+1}}, && \text{for } 1\leq i\leq K-k+1, \\
            & 0, && \text{for } K-k+2\leq i\leq K.
        \end{aligned}
        \end{cases}  
    \end{equation*}
\end{lemma}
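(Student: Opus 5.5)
Write $L:=K-k+1$ and $H:=1/I_1+\cdots+1/I_L$ (note the statement's ``for any $i\in[K]$'' should read ``for any $k\in[K]$''). The plan is to prove the two inequalities separately: the proposed weight vector $w^*$ achieves the value $1/H$, and no $w\in W_K$ can exceed it. Throughout, the inner infimum is the sum of the $k$ smallest elements of the multiset $\{w_iI_i: i\in[K]\}$.

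\emph{Achievability.} For the proposed $w^*$ we have $w^*_iI_i=1/H$ for $i\le L$ and $w^*_iI_i=0$ for $i>L$. There are exactly $K-L=k-1$ zeros. Any $C$ with $|C|=k$ must therefore contain at least one index $i\le L$, so $\sum_{i\in C}w^*_iI_i\ge 1/H$. Equality is attained by taking $C=\{L,\ldots,K\}$. Hence the supremum is at least $1/H$.

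\emph{Converse.} Fix any $w\in W_K$ and set $x_i:=w_iI_i\ge 0$. The key step is the choice of the set $C$. Let $C_0$ consist of the $k-1$ indices with the smallest values of $x_i$. Among the remaining $L$ indices $R:=[K]\setminus C_0$, pick $j\in R$ minimizing $x_j$, and take $C=C_0\cup\{j\}$. Then
\begin{equation*}
\sum_{i\in C}x_i\le k\,x_j .
\end{equation*}
This bound is too lossy, so instead I would argue more directly as follows.

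For every $C$ of the form $C_0\cup\{j\}$ with $j\in R$, the objective is at most $\sum_{C_0}x_i+\min_{j\in R}x_j$. It therefore suffices to show
\begin{equation*}
\min_{i\in C'}\Big(\sum_{C_0}x+x_i\Big)\le 1/H \quad\text{for some admissible } C_0 .
\end{equation*}
The cleanest route is an averaging argument. Choose $C_0$ to be any $k-1$ indices from $\{L+1,\ldots,K\}$, namely those with the \emph{smallest} $I_i$, rather than the smallest $x_i$. Then
\begin{equation*}
\inf_C\sum_{i\in C}x_i\le \sum_{i>L}w_iI_i+\min_{i\le L}w_iI_i .
\end{equation*}
Let $s:=\sum_{i>L}w_i$ and $m:=\min_{i\le L}w_iI_i$. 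Then $w_i\ge m/I_i$ for $i\le L$, so $1-s\ge mH$, i.e.\ $m\le (1-s)/H$. Also $\sum_{i>L}w_iI_i\le sI_L$, since $I_i\le I_L$ for $i>L$. Hence the value is at most
\begin{equation*}
sI_L+(1-s)/H .
\end{equation*}
This is not automatically at most $1/H$, because $I_L\ge 1/H$. So this is the step I expect to be the main obstacle, and the fix is to choose $C_0$ adaptively rather than in advance.

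The adaptive choice works by induction on $k$. For $k=1$ the claim is $\sup_w\min_i w_iI_i=1/\sum_{i\le K}1/I_i$, which follows from $w_i\ge m/I_i$ summed over all $i$. For general $k$, let $i^*$ minimize $x_i$ and include it in $C$. The remaining problem is on $[K]\setminus\{i^*\}$, with $k-1$ picks and a total weight of $1-w_{i^*}$. By the induction hypothesis, scaled by $1-w_{i^*}$, the remaining contribution is at most $(1-w_{i^*})/H'$, where $H'$ sums $1/I_i$ over the $K-k+1=L$ largest $I_i$ in $[K]\setminus\{i^*\}$.

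I then need to show $x_{i^*}+(1-w_{i^*})/H'\le 1/H$. Two cases arise.
\begin{itemize}
\item If $i^*>L$, then $H'=H$. We have $x_{i^*}\le\min_{i\le L}x_i\le(1-w_{i^*})/H$, but this only gives a bound of $2(1-w_{i^*})/H$, which is too weak. So in this case I would instead simply not pick $i^*$ first, and handle it by the monotone rearrangement below.
\item Otherwise, I would use the exchange argument described next.
\end{itemize}
Concretely, I would use the exchange argument: at an optimal $w$, the weights on $\{L+1,\ldots,K\}$ can be moved to index $L$ without decreasing the objective, since $I_L\ge I_i$ for those $i$. This reduces to $w$ supported on $[L]$. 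There, with $k-1$ zeros available, the objective equals $\min_{i\le L}w_iI_i\le 1/H$. Verifying that this transfer is monotone is the crux I would check carefully.
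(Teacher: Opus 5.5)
\textbf{The converse cannot be proved: the statement is false in general for $1<k<K$.}

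Your achievability half is correct. Under $w^*$ the values $w^*_iI_i$ equal $1/H$ on $[L]$ and $0$ on the remaining $k-1$ indices, so every $C$ with $|C|=k$ collects at least one $1/H$, and hence $\sup\ge 1/H$.

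The reverse inequality fails. Take $K=3$, $k=2$, $I_1=I_2=I_3=1$. The inner infimum is the sum of the two smallest weights, $1-\max_i w_i$. At the uniform vector this equals $2/3$, whereas the claimed value is $1/(1/I_1+1/I_2)=1/2$. So $w^*=(1/2,1/2,0)$ is not optimal either. More generally, with all $I_i$ equal, the uniform vector gives $kI_1/K$ against the claimed $I_1/(K-k+1)$, and $k/K>1/(K-k+1)$ exactly when $(k-1)(K-k)>0$.

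Your closing exchange step is where this surfaces. Moving the mass on $\{L+1,\ldots,K\}$ onto index $L$ creates $k-1$ zeros, which the inner infimum then absorbs, so the transfer is not monotone. In the example, $(1/3,1/3,1/3)\mapsto(1/3,2/3,0)$ drops the objective from $2/3$ to $1/3$. Your intermediate bound $sI_L+(1-s)/H$ was already the warning sign: for the uniform vector in this example it equals $2/3$, it is attained, and it exceeds $1/H=1/2$.

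\textbf{The paper's proof has the same defect.} Its first two equalities are fine: the swapping argument reducing to $W'_K$, and the identification of the inner infimum with $\sum_{i=K-k+1}^K w_iI_i$ on $W'_K$. The third equality restricts to $W''_K$ by setting $w_{K-k+2}=\cdots=w_K=0$. That is exactly your non-monotone transfer, and it fails at the uniform vector above.

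The correct value follows from LP duality. Write the inner infimum as $\max_\theta\{k\theta-\sum_i(\theta-w_iI_i)^+\}$. The dual problem is $\min\{\max_i y_iI_i:\ y\in[0,1]^K,\ \sum_i y_i=k\}$, which gives
\[
\sup_{w\in W_K}\,\inf_{C\subseteq[K]:\,|C|=k}\,\sum_{i\in C}w_iI_i=\max_{K-k+1\le j\le K}\frac{j-K+k}{\sum_{i=1}^{j}1/I_i}.
\]
The $j$-th term is attained by equalizing $w_iI_i$ over $i\le j$ and setting $w_i=0$ for $i>j$. This reduces to the stated $1/H$ only when the $j=L$ term is the largest, i.e.\ when $k=1$ or $I_{L+1}\le 1/H$.

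\textbf{Consequence downstream.} The change-of-measure step in the proof of Theorem~\ref{theorem: ALB} that invokes this lemma does not, in general, deliver the stated bound for $1<k<|B|$. With three equal signals and $k=2$ it yields only $\tfrac32\,d(\beta,\alpha)/I$ rather than $2\,d(\beta,\alpha)/I$. That step needs a different argument, for instance changing measure on one signal stream at a time.
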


    


\begin{IEEEproof}
   We have 
    \begin{equation*}
    \begin{aligned}
        & \sup_{w\in W_K} \inf_{C\subseteq[K]:\,|C|=k} \sum_{i\in C} w_i I_i \\
        = & \sup_{w\in W'_K} \inf_{C\subseteq[K]:\,|C|=k} \sum_{i\in C} w_i I_i = \sup_{w\in W^{  \prime}_K}  \sum_{i=K-k+1}^K w_i I_i , \\ 
        = & \sup_{w\in W^{\prime  \prime}_K }\Bigg\{ w_{K-k+1} I_{K-k+1} \Bigg\} 
          =\sup_{w\in W^{\prime \prime \prime}_K }\Bigg\{ w_{1} I_{1} \Bigg\} \\
        = & \frac{1}{1/I_1+\cdots+1/I_{K-k+1}},
    \end{aligned}
    \end{equation*}
    where 
    \begin{align*}
        W'_K &:= \{w\in W_K: w_1 I_1 \geq \cdots \geq w_K I_K\}, \\
        W^{\prime  \prime}_K &= \{w\in  W^{ \prime}_K:  w_{K-k+2}=\cdots=w_K=0\}, \\
         W^{\prime \prime \prime}_K &= \{w\in W^{\prime \prime}_K:  w_1 I_1= \ldots=  w_{K-k+1} I_{K-k+1}\}.
    \end{align*}
    The first equality says that the supremum can always be attained by  a $w\in W_K$ such that  $w_1 I_1 \geq \cdots \geq w_K I_K$. 
    To show this, it suffices to show that for any $w\in W_K$ that does not satisfy this property,  we can find a $w'\in W_K$ that does and  does not decrease the value of the objective function.
    Indeed, suppose that  $w_i I_i < w_j I_j$ for some $1\leq i<j \leq K$, and consider $w':=(w'_1,\ldots,w'_K)\in[0,1]^K$, where $w'_i=w_j I_j/I_i$, $w'_j=w_i I_i/I_j$ and $w'_{i'}=w_{i'}$ for all $i'\in[K]\backslash\{i,j\}$. 
    Note that $w'$ indeed belongs to $[0,1]^K$, since $w'_i\leq w_j$ and $w'_j<w_j I_j/I_j=w_j$, and 
    satisfies  $w'_1+\cdots+w'_K\leq 1$. 
    To see the latter, it suffices to show that $w'_i+w'_j  \leq 
     w_i+w_j$, or equivalently that $w_j I_j/I_i+ w_i I_i/I_j \leq w_i+w_j$. This is equivalent to $w_i I_i(I_i-I_j) \leq w_j I_j(I_i-I_j)$,      
    which clearly holds since 
     $I_i \geq I_j$ and  $w_i I_i < w_j I_j$.
    Meanwhile, the value of the objective function does not change, since $w'_i I_i=w_j I_j$, $w'_j I_j=w_i I_i$ and $w'_{i'} I_{i'}=w_{i'} I_{i'}$ for all $i'\in[K]\backslash\{i,j\}$.
    Thus, by conducting this operation for a finite number of times (like a sorting algorithm), we reach a $w' \in [0,1]^K$ with $w'_1 I_1\geq \cdots \geq w'_K I_K$ that  leads to the  same value for the objective function as $w$, 
    and may not use up all the budget. Without loss of generality, putting all remaining budget onto $w'_1$, we obtain a $w'\in W'_K$ whose value of the objective function is at least as good as that of $w\in W_K\backslash W'_K$.

   The second equality says that for any $w \in W'_K$  the infimum is attained by the sum of the $k$ smallest numbers in $w_1 I_1, \ldots, w_K I_K$, the ones with the largest indices. The third says that, since
   $I_1 \geq \ldots \geq I_K$, it is optimal to kill the contribution of all terms in the sum  apart from the first one, $w_{K-k+1} I_{K-k+1}$.  In order to 
   maximize this, since $w_1 I_1 \geq  \ldots \geq w_{K-k+1} I_{K-k+1}$,  we need to  set $w_1 I_1 =  \ldots = w_{K-k+1} I_{K-k+1}$. This gives the fourth equality. Finally, since $w$ is a distribution, we obtain the last equality, that is, 
   \begin{align*}
       1&=\sum_{i=1}^K w_i=
   \sum_{i=1}^{K-k+1} w_i=
    \sum_{i=1}^{K-k+1} \frac{w_i I_i}{I_i}
    =
    w_{1} I_{1}
     \sum_{i=1}^{K-k+1} (1/I_i).
       \end{align*}
     \end{IEEEproof}


\begin{lemma} \label{Lemma: for AUB}
 Let $\{S_k(n),\,n\geq 1\}$, $k\in[K]$ be $K\geq 1$ stochastic processes. Let $\ba=(a_1,\ldots,a_K)\in(0,\infty)^K$ and consider the stopping time
 \begin{equation*}
  T(\ba) := \inf\left\{n\geq 1: S_k(n)\geq a_k \text{ for all } k\in[K]\right\}.
 \end{equation*}
 Then, for any $\boldsymbol{\mu}=(\mu_1,\ldots,\mu_K)\in(0,\infty)^K$ and $\epsilon\in(0,1)$,
 \begin{equation*}
 \begin{aligned}
      T(\ba) & \leq \max_{k\in[K]} \left\{\frac{a_k}{\mu_k}\right\} \cdot \frac{1}{1-\epsilon} + \max_{k\in[K]} \cV_k(\mu_k,\epsilon), 
 \end{aligned}
 \end{equation*}
 where 
 \begin{equation*}
     \cV_k(\mu_k,\epsilon) := \sup\left\{ n\geq 1: S_k(n)/n\leq \mu_k(1-\epsilon) \right\}+1.
 \end{equation*}
\end{lemma}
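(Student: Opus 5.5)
The plan is to argue pathwise and deterministically. Fix $\epsilon\in(0,1)$ and set
\[
N:=\max_{k\in[K]}\{a_k/\mu_k\}\cdot\frac{1}{1-\epsilon},\qquad M:=\max_{k\in[K]}\cV_k(\mu_k,\epsilon).
\]
If $M=\infty$ the claimed inequality is trivial, so I will assume $M<\infty$.

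The goal is to exhibit an integer $n\le N+M$ such that $S_k(n)\ge a_k$ for every $k$; then $T(\ba)\le n\le N+M$ follows directly from the definition of $T(\ba)$ as an infimum. The natural candidate is
\[
n^*:=\lceil N\rceil\vee M .
\]
I would first verify the size bound. Since $\cV_k(\mu_k,\epsilon)\ge 1$, we have $M\ge1$. If $M\ge\lceil N\rceil$, then $n^*=M\le N+M$. Otherwise $n^*=\lceil N\rceil< N+1\le N+M$. In either case $n^*\le N+M$.

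Next I would check the threshold crossing in each coordinate. By definition, $\cV_k(\mu_k,\epsilon)-1$ is the last time $n$ at which $S_k(n)/n\le\mu_k(1-\epsilon)$. Hence every $n\ge\cV_k(\mu_k,\epsilon)$ satisfies $S_k(n)>n\mu_k(1-\epsilon)$. If the defining set is empty, the supremum is conventionally $0$, so $\cV_k=1$ and the inequality holds for all $n\ge1$. Since $n^*\ge M\ge\cV_k(\mu_k,\epsilon)$, we get
\[
S_k(n^*)>n^*\mu_k(1-\epsilon)\ge N\mu_k(1-\epsilon)\ge \frac{a_k}{\mu_k(1-\epsilon)}\,\mu_k(1-\epsilon)=a_k
\]
for every $k\in[K]$.

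There is no real obstacle in this argument. The only points needing care are the convention for an empty supremum, the case $M=\infty$, and the rounding from $N$ up to an integer, which is absorbed by $M\ge1$. The uses in the main text are the special case $K=1$, with $S=\lambda_i$ or $S=-\lambda_i$ and $\mu=I_i$ or $J_i$; these give the bounds on $T_i^{\text{SPRT}}$ and on the one-sided crossing times $\inf\{n:\lambda_i(n)\ge a\}$ and $\inf\{n:\lambda_i(n)\le -b'\}$, all of which are dominated by the corresponding one-sided crossing time.
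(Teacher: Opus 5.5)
Your proof is correct and is essentially the paper's deterministic, pathwise argument. The paper splits on the event $\{T(\ba)>\max_k\cV_k(\mu_k,\epsilon)\}$ and uses that some coordinate has not yet crossed its threshold at time $T(\ba)-1$, whereas you exhibit the witness time $\lceil N\rceil\vee M$ directly; both give $T(\ba)\le\max\{N+1,M\}\le N+M$, and your version also shows $T(\ba)<\infty$ whenever $M<\infty$, which the paper's use of $S_k(T(\ba)-1)$ takes for granted.
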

\begin{IEEEproof} 
    Fix $\boldsymbol{\mu}\in(0,\infty)^K$ and $\epsilon\in(0,1)$, and set 
    $$ \Gamma(\ba,\epsilon):=\left\{  T(\ba)>\max_{k\in[K]}\cV_k(\mu_k,\epsilon) \right\}.  $$
    On the event of $\Gamma(\ba,\epsilon)$ we have
    $$ \frac{S_k(T(\ba)-1)}{T(\ba)-1} >  \mu_k(1-\epsilon)\;\text{ for all } \; k\in[K], $$
    and
    $$ S_k( T(\ba)-1)<a_k \; \text{ for some }\; k\in[K], $$
    so
    \begin{gather*}
        ( T(\ba)-1) \mu_k (1-\epsilon) < S_k( T(\ba)-1) < a_k, \; \exists \; k\in[K], \\
         T(\ba) < \frac{a_k}{\mu_k}\cdot\frac{1}{1-\epsilon} + 1 \; \text{ for some }\; k\in[K],
    \end{gather*}
    i.e.,
    $$  T(\ba) < \max_{k\in[K]} \left\{\frac{a_k}{\mu_k}\right\} \cdot \frac{1}{1-\epsilon}+1. $$
    Therefore, 
    \begin{equation*}
    \begin{aligned}
        T(\ba) & = T(\ba)\cdot 1\{\Gamma(\ba,\epsilon)\} + T(\ba)\cdot 1\{\Gamma(\ba,\epsilon)^c\} \\
        & \leq \max_{k\in[K]} \left\{\frac{a_k}{\mu_k}\right\} \cdot \frac{1}{1-\epsilon} + \max_{k\in[K]}\cV_k(\mu_k,\epsilon).
    \end{aligned}
    \end{equation*}
\end{IEEEproof}

\begin{lemma} \label{Lemma: properties of LLR} 
    Let $f,g$ be two densities with respect to $\sigma$-finite measure $\nu$ and assume that $I:=\int g\log\frac{g}{f}d\nu\in(0,\infty)$.
    Let $\{X(n),\,n\geq 1\}$ be a sequence of i.i.d. random variables and denote by $\{\lambda(n):=\sum_{i=1}^n \log\frac{g(X(i))}{f(X(i))},\,n\geq 1\}$ the sequence of log-likelihood ratios. 
    Let $\Pro$ and $\Exp$ be the probability measure and expectation when the common density of $\{X(n),\,n\geq 1\}$ with respect to $\nu$ is $g$.
    Then, for any $x<I$, we have
    \begin{equation} \label{Chernoff's bound}
        \Pro(\lambda(n)/n\leq x) \leq e^{-n\psi(x)},
    \end{equation}
    and 
    \begin{equation} \label{rth moment}
        \Exp\left[\sup\{n\geq 1: \lambda(n)/n \leq x\}^r\right] < \infty \text{ for all } r\geq 1,
    \end{equation}
    where 
    \begin{equation*} 
        \psi(x) := \sup_{\theta\leq 0}\left\{ \theta x-\log \left(\int g^{1+\theta} f^{-\theta} d\nu\right) \right\}, \; x\leq I
    \end{equation*}
    is finite, convex and strictly decreasing at least in $[0,I]$ with $\psi(I)=0$ and $\psi(0):= C>0$ known as the Chernoff information between $f$ and $g$.
\end{lemma}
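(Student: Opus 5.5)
The plan is to prove the two claims of Lemma \ref{Lemma: properties of LLR} in order: first the Chernoff bound \eqref{Chernoff's bound}, then the moment bound \eqref{rth moment}, which will follow from the first by a union bound.

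\textbf{Step 1 (Chernoff bound).} Write $Z=\log(g(X)/f(X))$ under density $g$, and set $M(\theta):=\Exp[e^{\theta Z}]=\int g^{1+\theta}f^{-\theta}d\nu$. For $\theta\in[-1,0]$ we have $M(\theta)\le 1$ by H\"older's inequality, since $g^{1+\theta}f^{-\theta}$ is a geometric mean of two densities. For any $\theta\le 0$, Markov's inequality applied to $e^{\theta\lambda(n)}$ gives
\[
\Pro(\lambda(n)\le nx)\le e^{-n\theta x}M(\theta)^n=\exp\{-n(\theta x-\log M(\theta))\}.
\]
Optimizing over $\theta\le0$ yields \eqref{Chernoff's bound}.

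The claimed properties of $\psi$ then follow from standard facts:
\begin{itemize}
\item $\psi$ is a supremum of affine functions of $x$, so it is convex.
\item Taking $\theta=0$ gives $\psi\ge0$.
\item The key fact is $\psi(x)>0$ for $x<I$. Near $\theta=0^-$, $\log M$ is differentiable from the left with derivative $I$, by dominated convergence using $\Exp|Z|<\infty$, which follows from $I<\infty$ and $\Exp Z^-\le 1$. Hence $\theta x-\log M(\theta)=\theta(x-I)+o(|\theta|)>0$ for small negative $\theta$.
\item $\psi(I)=0$ by Jensen's inequality.
\item Strict decrease on $[0,I]$ follows from convexity, positivity below $I$, and $\psi(I)=0$.
\item Finiteness at $x\ge 0$ holds because for $\theta\in[-1,0]$ the objective is at most $-\log M(\theta)$, which is bounded there since $M(-1)=\int f\,1\{g>0\}>0$ and $\log M$ is convex. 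For $\theta<-1$ I would rely on $x\ge0$, noting that the supremum can be restricted without loss to $[-1,0]$ when $x\ge 0$, at least when only the positivity of $\psi$ is needed. The paper only uses positivity, so I would not belabor finiteness beyond $[0,I]$.
\end{itemize}

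\textbf{Step 2 (moment bound).} Let $L:=\sup\{n\ge1:\lambda(n)/n\le x\}$, with $\sup\emptyset=0$. Then $\{L\ge N\}\subseteq\bigcup_{n\ge N}\{\lambda(n)/n\le x\}$, so by Step 1,
\[
\Pro(L\ge N)\le\sum_{n\ge N}e^{-n\psi(x)}=\frac{e^{-N\psi(x)}}{1-e^{-\psi(x)}}.
\]
This bound is geometric because $\psi(x)>0$. Therefore
\[
\Exp[L^r]\le\sum_{N\ge1}N^r\,\Pro(L\ge N)<\infty .
\]
If $x<0$, I would use monotonicity, $\{\lambda(n)/n\le x\}\subseteq\{\lambda(n)/n\le 0\}$, to reduce to the case $x\ge0$, where $\psi$ is controlled. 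Applying the same argument under $f$ to $-\lambda$ with $J$ then gives finiteness of $V_B(r,\epsilon)$.

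\textbf{Main obstacle.} The delicate point is strict positivity $\psi(x)>0$ for $x<I$ when only the first moment of $Z$ is finite. Here the moment generating function may fail to be finite for $\theta>0$, which is why we restrict to $\theta\le0$. The argument rests on the left-derivative of $\log M$ at $0$ being $I$. I would justify this via $(e^{\theta Z}-1)/\theta\to Z$ as $\theta\to0^-$, dominated by $|Z|$ for $\theta\in[-1,0)$, using $|e^{u}-1|\le|u|$ for $u\le0$ and $(e^{u}-1)/u\le 1$.
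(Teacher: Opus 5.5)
Your proposal follows the paper's route. You use a Chernoff (exponential Markov) bound with $\theta\le 0$, then the inclusion $\{L\ge N\}\subseteq\bigcup_{n\ge N}\{\lambda(n)/n\le x\}$, a union bound, and summation of the geometric tails. The paper cites Dembo--Zeitouni for the Chernoff bound and the properties of $\psi$, whereas you derive them. Your tail-sum inequality $\Exp[L^r]\le\sum_{N}N^r\,\Pro(L\ge N)$ is the correct form of the one the paper writes with a factor $1/r$; that factor should be $r$, which is harmless for finiteness.

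Two details in your derivation of the properties of $\psi$ need repair, though neither affects how the lemma is used.

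\textbf{The dominating function.} The step you flag as the main obstacle uses the wrong dominating function. Take $\theta=-t$ with $t\in(0,1]$. On $\{Z<0\}$ you have $\theta Z>0$, so neither of your two elementary inequalities applies. In fact $|(e^{\theta Z}-1)/\theta|=(e^{t|Z|}-1)/t\ge |Z|$ there, so $|Z|$ cannot dominate. There are two easy repairs.
\begin{itemize}
\item Since $t\mapsto (e^{t|Z|}-1)/t$ is nondecreasing, dominate instead by $Z^+ +(e^{-Z}-1)1\{Z<0\}$. This is integrable because $\Exp[e^{-Z}]=\int_{g>0}f\,d\nu\le 1$.
\item Alternatively, skip derivatives. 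The map $t\mapsto (1-e^{-tZ})/t$ is nonincreasing on $t>0$ and increases to $Z$ as $t\downarrow 0$, and $1-e^{-Z}$ is integrable. Monotone convergence then gives $\Exp[(1-e^{-tZ})/t]\uparrow I$. Combined with $\log M(-t)\le M(-t)-1$, this yields $-tx-\log M(-t)\ge t\,(\Exp[(1-e^{-tZ})/t]-x)>0$ for small $t$.
\end{itemize}

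\textbf{Finiteness of $\psi$.} Restricting the supremum to $[-1,0]$ when $x\ge 0$ requires $\int_{g>0}f\,d\nu=1$, i.e.\ $f\ll g$. This does not follow from $I<\infty$ alone. For example, take $g=1_{[0,1]}$ and $f=\tfrac12\,1_{[0,2]}$. Then $\psi(x)=\sup_{\theta\le 0}\theta(x-\log 2)=+\infty$ for every $x<I=\log 2$, so the finiteness asserted in the statement can fail. It does hold in the paper's setting, because $J<\infty$ forces $f\ll g$. As you note, only $\psi(x)>0$ (with $e^{-\infty}=0$) is needed for \eqref{rth moment}.
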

\begin{IEEEproof}
Inequality \eqref{Chernoff's bound} is known as the Chernoff bound.
This and the properties of function $\psi(\cdot)$ can be found, e.g., in \cite[Chapter 3.4]{Dembo_Zeitouni_LDPBook}).
Next, we show inequality \eqref{rth moment}.
Indeed, for any $r\geq 1$ and $x<I$, we have 
    \begin{equation*}
    \begin{aligned}
        & \; \Exp[\sup\{t\geq 1: \lambda(t)/t \leq x\}^r] \\
        \leq & \; \frac{1}{r}\sum_{n=1}^\infty n^{r-1}\Pro(\sup\{t\geq 1: \lambda(t)/t \leq x\}\geq n) \\
        = & \; \frac{1}{r} \sum_{n=1}^\infty n^{r-1}\Pro\left( \exists\,m\geq n: \, \lambda(m)/m\leq x \right) \\ 
        \leq & \; \frac{1}{r} \sum_{n=1}^\infty n^{r-1} \sum_{m=n}^\infty \Pro(\lambda(m)/m\leq x) \\
        = & \; \frac{1}{r} \sum_{m=1}^\infty \left(\sum_{n=1}^m n^{r-1}\right) \Pro(\lambda(m)/m\leq x) \\
        \leq & \; \frac{1}{r} \sum_{m=1}^\infty \frac{(m+1)^r}{r} e^{-m\psi(x)} < \infty, \\
    \end{aligned}
    \end{equation*}
    where in the first inequality we used the inequality that $\Exp[Z^r]\leq \frac{1}{r} \sum_{n=1}^\infty n^{r-1} \Pro(Z\geq n)$ for non-negative, integer-valued random variable $Z$.
\end{IEEEproof}

\bibliographystyle{chicago}
\bibliography{main}

@article{Qunzhi2023,
author = {Qunzhi Xu and Yajun Mei},
title = {Asymptotic optimality theory for active quickest detection with unknown postchange parameters},
journal = {Sequential Analysis},
volume = {42},
number = {2},
pages = {150--181},
year = {2023},
publisher = {Taylor \& Francis},
doi = {10.1080/07474946.2023.2187417}
}

@ARTICLE{Aris_TIT2025,
  author={Tsopelakos, Aristomenis and Fellouris, Georgios},
  journal={IEEE Transactions on Information Theory}, 
  title={Sequential Anomaly Identification Under Sampling Constraints for Generalized Error Metrics}, 
  year={2025},
  volume={71},
  number={12},
  pages={9753-9783},
  doi={10.1109/TIT.2025.3622405}}

@ARTICLE{Malloy_Nowak_2013,
  author={Malloy, Matthew L. and Tang, Gongguo and Nowak, Robert D.},
  journal={IEEE Transactions on Information Theory}, 
  title={The Sample Complexity of Search Over Multiple Populations}, 
  year={2013},
  volume={59},
  number={8},
  pages={5039-5050},
  doi={10.1109/TIT.2013.2258071}}

@article{SPL2025,
      title={Signal Detection under Composite Hypotheses with Identical Distributions for Signals and for Noises}, 
      author={Yiming Xing and Anamitra Chaudhuri and Yifan Chen},
      year={2025},
      eprint={2507.21692},
      archivePrefix={arXiv},
      primaryClass={stat.ME},
      url={https://arxiv.org/abs/2507.21692}, 
      journal={arXiv preprint arXiv:2507.21692}
}

@INPROCEEDINGS{ITW2024,
  author={Xing, Yiming and Yan, Shen and Wang, Ziming},
  booktitle={2024 IEEE Information Theory Workshop (ITW)}, 
  title={High-Dimensional Sequential Testing of Multiple Hypotheses}, 
  year={2024},
  volume={},
  number={},
  pages={384-389},
  doi={10.1109/ITW61385.2024.10806932}}

@inproceedings{Fellouris2013unstructured,
  title={Unstructured sequential testing in sensor networks},
  author={Fellouris, Georgios and Tartakovsky, Alexander},
  booktitle={52nd IEEE Conference on Decision and Control},
  pages={4784--4789},
  year={2013},
  organization={IEEE}
}

@article{Lai2011quickest,
  title={Quickest search over multiple sequences},
  author={Lai, Lifeng and Poor, H Vincent and Xin, Yan and Georgiadis, Georgios},
  journal={IEEE Transactions on Information Theory},
  volume={57},
  number={8},
  pages={5375--5386},
  year={2011},
  publisher={IEEE}
}

@ARTICLE{Ali2016quickest,
  author={Heydari, Javad and Tajer, Ali and Vincent Poor, H.},
  journal={IEEE Transactions on Information Theory}, 
  title={Quickest Linear Search over Correlated Sequences}, 
  year={2016},
  volume={62},
  number={10},
  pages={5786-5808},
  doi={10.1109/TIT.2016.2593772}}

@article{Ref4PatternMining_2017,
  title={A survey of sequential pattern mining},
  author={Fournier-Viger, Philippe and Lin, Jerry Chun-Wei and Kiran, Rage Uday and Koh, Yun Sing and Thomas, Rincy},
  journal={Data Science and Pattern Recognition},
  volume={1},
  number={1},
  pages={54--77},
  year={2017}
}

@article{Ref4GeneAsso_2021,
  title={Genome-wide association studies},
  author={Uffelmann, Emil and Huang, Qin Qin and Munung, Nchangwi Syntia and De Vries, Jantina and Okada, Yukinori and Martin, Alicia R and Martin, Hilary C and Lappalainen, Tuuli and Posthuma, Danielle},
  journal={Nature Reviews Methods Primers},
  volume={1},
  number={1},
  pages={59},
  year={2021},
  publisher={Nature Publishing Group UK London}
}

@article{Ref4FraudDet_2022,
  title={Financial fraud: a review of anomaly detection techniques and recent advances},
  author={Hilal, Waleed and Gadsden, S Andrew and Yawney, John},
  journal={Expert systems With applications},
  volume={193},
  pages={116429},
  year={2022},
  publisher={Elsevier}
}

@ARTICLE{Ref4SpectrumSensing_2016,
  author={Geng, Jun and Xu, Weiyu and Lai, Lifeng},
  journal={IEEE Transactions on Signal Processing}, 
  title={Quickest Sequential Multiband Spectrum Sensing With Mixed Observations}, 
  year={2016},
  volume={64},
  number={22},
  pages={5861-5874},
  doi={10.1109/TSP.2016.2602802}}

@ARTICLE{Ana_2024TIT,
  author={Chaudhuri, Anamitra and Fellouris, Georgios and Tajer, Ali},
  journal={IEEE Transactions on Information Theory}, 
  title={Round Robin Active Sequential Change Detection for Dependent Multi-Channel Data}, 
  year={2024},
  volume={70},
  number={12},
  pages={9327-9351},
  doi={10.1109/TIT.2024.3475394}}

@ARTICLE{Qunzhi_2021TIT,
  author={Xu, Qunzhi and Mei, Yajun and Moustakides, George V.},
  journal={IEEE Transactions on Information Theory}, 
  title={Optimum Multi-Stream Sequential Change-Point Detection With Sampling Control}, 
  year={2021},
  volume={67},
  number={11},
  pages={7627-7636},
  doi={10.1109/TIT.2021.3074961}}

@article{Fellouris_2024_CDwithCS,
title = "Quickest Change Detection with Controlled Sensing",
author = "Veeravalli, {Venugopal V.} and Georgios Fellouris and Moustakides, {George V.}",
year = "2024",
doi = "10.1109/JSAIT.2024.3362324",
language = "English (US)",
volume = "5",
pages = "1--11",
journal = "IEEE Journal on Selected Areas in Information Theory",
issn = "2641-8770",
publisher = "Institute of Electrical and Electronics Engineers Inc.",
}

@ARTICLE{Kobi_2022_switching,
  author={Lambez, Tidhar and Cohen, Kobi},
  journal={IEEE Transactions on Signal Processing}, 
  title={Anomaly Search With Multiple Plays Under Delay and Switching Costs}, 
  year={2022},
  volume={70},
  number={},
  pages={174-189},
  doi={10.1109/TSP.2021.3136810}}

@ARTICLE{Kobi_2014_index,
  author={Cohen, Kobi and Zhao, Qing and Swami, Ananthram},
  journal={IEEE Transactions on Signal Processing}, 
  title={Optimal Index Policies for Anomaly Localization in Resource-Constrained Cyber Systems}, 
  year={2014},
  volume={62},
  number={16},
  pages={4224-4236},
  doi={10.1109/TSP.2014.2332982}}

@ARTICLE{Kobi_2019_nonlinear,
  author={Gurevich, Andrey and Cohen, Kobi and Zhao, Qing},
  journal={IEEE Transactions on Signal Processing}, 
  title={Sequential Anomaly Detection Under a Nonlinear System Cost}, 
  year={2019},
  volume={67},
  number={14},
  pages={3689-3703},
  doi={10.1109/TSP.2019.2918981}}

@article{Kobi_2023_hiera,
  title={Anomaly search over discrete composite hypotheses in hierarchical statistical models},
  author={Gafni, Tomer and Wolff, Benjamin and Revach, Guy and Shlezinger, Nir and Cohen, Kobi},
  journal={IEEE Transactions on Signal Processing},
  volume={71},
  pages={202--217},
  year={2023},
  publisher={IEEE}
}

@inproceedings{OAI,
  title={Optimal best arm identification with fixed confidence},
  author={Garivier, Aur{\'e}lien and Kaufmann, Emilie},
  booktitle={Conference on Learning Theory},
  pages={998--1027},
  year={2016},
  organization={PMLR}
}

@article{ControlledSensing,
  title={Controlled sensing for multihypothesis testing},
  author={Nitinawarat, Sirin and Atia, George K and Veeravalli, Venugopal V},
  journal={IEEE Transactions on automatic control},
  volume={58},
  number={10},
  pages={2451--2464},
  year={2013},
  publisher={IEEE}
}

@article{ControlledSensing_NonUniformCost,
  title={Controlled sensing for sequential multihypothesis testing with controlled Markovian observations and non-uniform control cost},
  author={Nitinawarat, Sirin and Veeravalli, Venupogal V},
  journal={Sequential Analysis},
  volume={34},
  number={1},
  pages={1--24},
  year={2015},
  publisher={Taylor \& Francis}
}

@article{ControlledSensing_Composite,
  title={Sequential controlled sensing for composite multihypothesis testing},
  author={Deshmukh, Aditya and Veeravalli, Venugopal V and Bhashyam, Srikrishna},
  journal={Sequential Analysis},
  volume={40},
  number={2},
  pages={259--289},
  year={2021},
  publisher={Taylor \& Francis}
}

@article{PaperI,
  title={Asymptotically optimal multistage tests for non-iid data},
  author={Xing, Yiming and Fellouris, Georgios},
  journal={Statistica Sinica},
  volume={34},
  pages={2325--2346},
  year={2024}
}

@book{measure_theory_book,
  title={Measure theory and probability theory},
  author={Athreya, Krishna B and Lahiri, Soumendra N},
  volume={19},
  year={2006},
  publisher={Springer}
}

@article{PaperIII,
  title={Asymptotically optimal sequential multiple testing with asynchronous decisions},
  author={Xing, Yiming and Fellouris, Georgios},
  journal={Bernoulli},
  volume={31},
  number={1},
  pages={271--294},
  year={2025},
  publisher={Bernoulli Society for Mathematical Statistics and Probability}
}

@ARTICLE{PaperII,
  author={Xing, Yiming and Fellouris, Georgios},
  journal={IEEE Transactions on Information Theory}, 
  title={Signal Recovery With Multistage Tests and Without Sparsity Constraints}, 
  year={2023},
  volume={69},
  number={11},
  pages={7220-7245},
  doi={10.1109/TIT.2023.3299874}}

@article{chaudhuri2024joint,
  title={Joint sequential detection and isolation for dependent data streams},
  author={Chaudhuri, Anamitra and Fellouris, Georgios},
  journal={The Annals of Statistics},
  volume={52},
  number={5},
  pages={1899--1926},
  year={2024},
  publisher={Institute of Mathematical Statistics}
}

@ARTICLE{Aris_IEEE,
  author={Tsopelakos, Aristomenis and Fellouris, Georgios},
  journal={IEEE Transactions on Information Theory}, 
  title={Sequential Anomaly Detection Under Sampling Constraints}, 
  year={2023},
  volume={69},
  number={12},
  pages={8126-8146},
  doi={10.1109/TIT.2022.3177142}}

@article{Bartroff_2010,
author = { Jay   Bartroff  and  Tze   Leung   Lai },
title = {Multistage Tests of Multiple Hypotheses},
journal = {Communications in Statistics - Theory and Methods},
volume = {39},
number = {8-9},
pages = {1597-1607},
year  = {2010},
publisher = {Taylor & Francis},
doi = {10.1080/03610920802592852},
URL = { 
        https://doi.org/10.1080/03610920802592852},
eprint = { 
        https://doi.org/10.1080/03610920802592852}
}

@article{Bartroff_2014_FWER,
  title={Sequential tests of multiple hypotheses controlling type I and II familywise error rates},
  author={Bartroff, Jay and Song, Jinlin},
  journal={Journal of Statistical Planning and Inference},
  volume={153},
  pages={100--114},
  year={2014},
  publisher={Elsevier}
}

@article{De_Baron_Seq_Bonf,
author = { Shyamal K.   De  and  Michael   Baron },
title = {Sequential Bonferroni Methods for Multiple Hypothesis Testing with Strong Control of Family-Wise Error Rates I and II},
journal = {Sequential Analysis},
volume = {31},
number = {2},
pages = {238-262},
year  = {2012},
publisher = {Taylor & Francis},
doi = {10.1080/07474946.2012.665730},
URL = { 
        https://doi.org/10.1080/07474946.2012.665730
},
eprint = { 
        https://doi.org/10.1080/07474946.2012.665730
}
}

@article{Step_up_down,
title = {Step-up and step-down methods for testing multiple hypotheses in sequential experiments},
journal = {Journal of Statistical Planning and Inference},
volume = {142},
number = {7},
pages = {2059-2070},
year = {2012},
issn = {0378-3758},
doi = {https://doi.org/10.1016/j.jspi.2012.02.005},
url = {https://www.sciencedirect.com/science/article/pii/S037837581200050X},
author = {Shyamal K. De and Michael Baron}
}

@ARTICLE{Fellouris_noniid,
  author={Fellouris, Georgios and Tartakovsky, Alexander G.},
  journal={IEEE Transactions on Information Theory}, 
  title={Multichannel Sequential Detection—Part I: Non-i.i.d. Data}, 
  year={2017},
  volume={63},
  number={7},
  pages={4551-4571},
  doi={10.1109/TIT.2017.2689785}}

@article{Tartakovsky_2003,
  title={Sequential detection of targets in multichannel systems},
  author={Tartakovsky, Alexander G and Li, X Rong and Yaralov, George},
  journal={IEEE Transactions on Information Theory},
  volume={49},
  number={2},
  pages={425--445},
  year={2003},
  publisher={IEEE}
}

@article{Kobi_2015_AO,
  title={Asymptotically optimal anomaly detection via sequential testing},
  author={Cohen, Kobi and Zhao, Qing},
  journal={IEEE Transactions on Signal Processing},
  volume={63},
  number={11},
  pages={2929--2941},
  year={2015},
  publisher={IEEE}
}

@article{Kobi_2015_active,
  title={Active hypothesis testing for anomaly detection},
  author={Cohen, Kobi and Zhao, Qing},
  journal={IEEE Transactions on Information Theory},
  volume={61},
  number={3},
  pages={1432--1450},
  year={2015},
  publisher={IEEE}
}

@article{Kobi_2018_heterogeneous,
  title={Active anomaly detection in heterogeneous processes},
  author={Huang, Boshuang and Cohen, Kobi and Zhao, Qing},
  journal={IEEE Transactions on Information Theory},
  volume={65},
  number={4},
  pages={2284--2301},
  year={2018},
  publisher={IEEE}
}

@article{Kobi_2020_composite,
  title={Searching for anomalies over composite hypotheses},
  author={Hemo, Bar and Gafni, Tomer and Cohen, Kobi and Zhao, Qing},
  journal={IEEE Transactions on Signal Processing},
  volume={68},
  pages={1181--1196},
  year={2020},
  publisher={IEEE}
}

@article{Ref4MissleDet_2004,
  title={Boost-phase identification of theater ballistic missiles using radar measurements},
  author={Almogi-Nadler, Meirav and Oshman, Yaakov and Ben-Asher, Joseph Z},
  journal={Journal of Guidance, Control, and Dynamics},
  volume={27},
  number={2},
  pages={197--208},
  year={2004}
}

@article{Ref4AnomalyDet_2009,
  title={Anomaly detection: A survey},
  author={Chandola, Varun and Banerjee, Arindam and Kumar, Vipin},
  journal={ACM Computing Surveys (CSUR)},
  volume={41},
  number={3},
  pages={1--58},
  year={2009},
  publisher={ACM New York, NY, USA}
}

@book{Tartakovsky_Book,
author = {Tartakovsky, Alexander and Nikiforov, Igor and Basseville, Michele},
title = {Sequential Analysis: Hypothesis Testing and Changepoint Detection},
year = {2014},
isbn = {1439838208},
publisher = {Chapman \& Hall/CRC},
edition = {1st}
}

@book{Wald_Book,
author={Abraham Wald},
title={Sequential Analysis},
year={1947},
publisher={John Wiley \& Sons},
address={New York}
}

@article{Malloy_Nowak_2014,
  title={Sequential testing for sparse recovery},
  author={Malloy, Matthew L and Nowak, Robert D},
  journal={IEEE Transactions on Information Theory},
  volume={60},
  number={12},
  pages={7862--7873},
  year={2014},
  publisher={IEEE}
}

@book{Dembo_Zeitouni_LDPBook,
  title={Large Deviations Techniques and Applications},
  author={Amir Dembo and Ofer Zeitouni},
  journal={Applications of Mathematics},
  publisher={Springer, Berlin, Heidelberg},
  year={1998}
}

@book{Bartroff_Book_Clinicaltrials,
  title={Sequential experimentation in clinical trials: design and analysis},
  author={Bartroff, Jay and Lai, Tze Leung and Shih, Mei-Chiung},
  volume={298},
  year={2012},
  publisher={Springer Science \& Business Media}
}

@article{Song_prior,
author = {Yanglei Song and Georgios Fellouris},
title = {{Asymptotically optimal, sequential, multiple testing procedures with prior information on the number of signals}},
volume = {11},
journal = {Electronic Journal of Statistics},
number = {1},
publisher = {Institute of Mathematical Statistics and Bernoulli Society},
pages = {338 -- 363},
year = {2017},
doi = {10.1214/17-EJS1223},
URL = {https://doi.org/10.1214/17-EJS1223}
}

@article{Song_AoS,
author = {Yanglei Song and Georgios Fellouris},
title = {{Sequential multiple testing with generalized error control: An asymptotic optimality theory}},
volume = {47},
journal = {The Annals of Statistics},
number = {3},
publisher = {Institute of Mathematical Statistics},
pages = {1776 -- 1803},
year = {2019},
doi = {10.1214/18-AOS1737},
URL = {https://doi.org/10.1214/18-AOS1737}
}

\end{document}